\newcommand{\blind}{0}
\newtheorem{theorem}{Theorem}
\newtheorem{lemma}{Lemma}
\newtheorem{proposition}{Proposition}
\theoremstyle{definition}
\newcommand{\UU}[1]{U^{(#1)}}
\newcommand{\uu}[2][{}]{u^{(#1)}_{#2}}
\newcommand{\sk}[2][{}]{S^{(#1)}_{#2}}
\newcommand{\R}{\mathbb{R}}
\newcommand{\N}{\mathbb{N}^+}
\newcommand{\G}{\mathcal{G}}
\newcommand{\T}{\mathcal{T}}
\newcommand{\Z}{\mathcal{Z}}
\newcommand{\Y}{\mathcal{Y}}
\newcommand{\Q}{\mathcal{Q}}
\newcommand{\tvec}{\text{vec}}
\newcommand{\tdiag}{\text{diag}}
\newcommand{\pr}{\mathrm{pr}}
\newcommand{\cov}{\mathrm{cov}}
\begin{document}

\def\spacingset#1{\renewcommand{\baselinestretch}%
{#1}\small\normalsize} \spacingset{1}


\if0\blind
{
  \title{\bf Bayesian Adaptive Tucker Decompositions for Tensor Factorization}
  \author{Federica Stolf\\ 
    Department of Statistical Science, Duke University, Durham, NC, USA\\
    and \\
    Antonio Canale \\
    Department of Statistical Sciences, University of Padova, Padova, Italy}
  \maketitle
} \fi

\if1\blind
{
  \bigskip
  \bigskip
  \bigskip
  \begin{center}
    {\LARGE\bf Title}
\end{center}
  \medskip
} \fi

\bigskip
\begin{abstract}
Tucker tensor decomposition offers a more effective representation for multiway data compared to the widely used PARAFAC model. However, its flexibility brings the challenge of selecting the appropriate latent multi-rank. To overcome the issue of pre-selecting the latent multi-rank, we introduce a Bayesian adaptive Tucker decomposition model that infers the multi-rank automatically via an infinite increasing shrinkage prior. The model introduces local sparsity in the core tensor, inducing rich and at the same time parsimonious dependency structures.
Posterior inference proceeds via an efficient adaptive Gibbs sampler, supporting both continuous and binary data  and allowing for straightforward missing data imputation when dealing with incomplete multiway data.
We discuss fundamental properties of the proposed modeling framework, providing theoretical justification. Simulation studies and applications to chemometrics and complex ecological data offer compelling evidence of its advantages over existing tensor factorization methods.

\end{abstract}

\noindent%
{\it Keywords:}  Factor model; Higher-order SVD; Multiway data;  Tensor completion; Unknown multi-rank.
\vfill

\newpage
\spacingset{1.2} 

\section{Introduction}
In many applications, data naturally come in the form of multiway arrays or tensors such as recommender system \citep{Bietal}, ecology \citep{Frelat3D} and chemometrics \citep{tomasi2005parafac}.
 In a general setting, let  $\Y = \{y_{i_1 \cdots i_K}, \ i_k = 1, \dots, n_k\}$ denote a $K$-way tensor and we assume that $\mathcal{Y} \in \mathbb{R}^{n_1 \times \dots \times n_k}$, with the integer $n_k$ being the dimension of $\Y$ along the $k$-th way. We are motivated by settings where $K$ is relatively small. 
 For instance, in ecological studies examining species co-occurrence, $K=3$ with data organized in a three-way array structured as site $\times$ species $\times$ year.
One naive approach to analyze this type of data is to flatten the data into a matrix and then use existing methods for matrix factorization. However, this transformation fails to preserve the data structure, making learning low-dimensional relationships more difficult. 

Tensor decomposition methods based on the low-rank tensor approximation \citep{KoldaandBader} offer a powerful tool to analyze multiway data, by providing a concise representation that effectively captures the dominant underlying features. 
The two main forms of tensor decomposition,  which extend the matrix singular value decomposition, are  the Tucker decomposition and its restriction known as the parallel factor (PARAFAC) decomposition. PARAFAC factorization decomposes a tensor into a sum of $R$ rank-one tensor, where $R$ is referred to as the rank of the tensor, while Tucker factorization decomposes a tensor into a small core tensor multiplied by a matrix along each mode. Tucker representation is more flexible than the PARAFAC, as it includes a multi-rank $(R_1, \dots, R_k)$ that allows for different decomposition ranks for various tensor dimension margins. This flexibility enables a more effective data compression, as not all the dimensions of a tensor might require all $R$ principal components in order to be faithfully represented by such a decomposition. In this case, PARAFAC rank can be considerably higher than the $K$ ranks composing the Tucker multi-rank. 
Tucker representation is particularly effective when the tensor dimensions $n_1, \dots, n_K$ differ from each other.

Bayesian tensor decomposition models \citep{prob_tensor}  are naturally appealing, as they allow for straightforward quantification of uncertainty in estimating model parameters, while providing a principled mechanism for dealing with missing data. 
Partially observed multiway data, indeed, are abundant in numerous fields; see for example  \cite{tomasi2005parafac} for applications in chemometrics.
Missing data can arise in a variety of settings due to loss of information, errors in the data collection process, or costly experiments. 
The problem of recovering latent structures from incomplete tensors to reconstruct the missing entries is often referred to as tensor completion.
For tensor factorization models in a Bayesian framework see for instance \citet{HoffSeparable} and \citet{xu2013bayesian}. 
Refer to \cite{shi2023bayesian} for a general review of Bayesian methods in tensor analysis, including topics beyond tensor completion, such as tensor regression models \citep{Guhaniyogietal2017, papadogeorgou2021soft}. 

An open problem in low-rank tensor factorization literature is the choice of the latent rank.
Current practice typically specifies the ranks in advance or uses heuristics for choosing the ranks based on repeated fitting under different choices. However, in the Tucker case, an exhaustive search over the space of possible multi-ranks may be infeasible, since the number of  possibilities increases exponentially with the order of the tensor. 
In principle, one can apply a matrix rank determination method for matrix separately to each mode-$k$ matricization of a tensor, such as the criteria proposed in \cite{ahn2013eigenvalue}. However, these methods are inadequate for binary tensor data or tensors with missing values.
A more elegant approach relies on Bayesian nonparametric methods with \citet{rai2014scalable} and \citet{ZhaoCP2015} being interesting contributions in the PARAFAC context. 

Motivated by the above considerations and the need for an intermediate formulation that balances the simplicity of the PARAFAC with the flexibility of the Tucker in a parsimonious way, we introduce a flexible Bayesian Tucker decomposition model where the multi-rank is not pre-specified but is adaptively learned from the data. We rely on successful Bayesian nonparametric models that include more than enough latent elements, with their selection induced by appropriate shrinkage priors that adaptively eliminate unnecessary components.
Exploiting the literature on infinite factor model \citep{Bhattacharyadunson11, CUSP, schiavon2022generalized} we introduce a multiway extension of the cumulative shrinkage process (CUSP) of \citet{CUSP} to apply joint shrinkage for multiple latent quantities along different dimensions.
While the proposed multiway CUSP builds on the foundation of \citet{CUSP},  it addresses unique challenges inherent to tensor-structured data. For example, besides providing shrinkage towards low-rank decomposition of the tensor margins, the proposed method also enables local shrinkage in the latent factors by assuming a suitable shrinkage prior for the elements of the core tensor.
 Incorporating zero entries in the core tensor is crucial for reducing complexity, as inference in the Tucker model scales with the size of the core tensor.
 By shrinking towards zero the elements of the core tensor we are moving it to something similar to a superdiagonal tensor  with different dimension margins, and thus bridging PARAFAC and Tucker decompositions. For a contribution in this direction for probability tensors for multivariate categorical data see the collapsed Tucker decomposition in \cite{Johndrowetal2017}.

In the next sections, we first establish the notation and then present our proposed Bayesian adaptive Tucker (AdaTuck) decomposition.  This novel Bayesian approach estimates the latent multi-rank of tensors using an increasing shrinkage prior while enforcing local sparsity.
We will discuss appealing theoretical properties and outline straightforward implementations for modeling both continuous and binary incomplete multiway data using fully conjugate Bayesian analysis.  Posterior distributions are approximated through closed-form Gibbs sampling updates, making inference highly scalable and straightforward to predict missing entries in the data, as discussed in Section \ref{sec:computation}. The empirical performances of the proposed approach are tested against alternatives in Section \ref{sec:illustration}, including simulated and real data examples from a chemometrics application and a complex ecological study. 
Section \ref{sec:end} presents the main findings of the paper and outlines possible directions for future research. Technical details, including proofs and additional numerical results are reported in the Supplementary Materials.

\section{Adaptive Tucker factorization}

\subsection{Tucker tensor decomposition}
Suppose we observe $\Y = \Z + \mathcal{E}$, where $\Z$ is a mean tensor with $\Z \in \mathbb{R}^{n_1 \times \dots \times n_K}$ describing a signal of interest and $\mathcal{E}$ is a residual noise. We focus on modelling $\Z$ with a Tucker decomposition, i.e. 

\begin{equation} \label{eq:tuck_intro}
    \Z = \sum_{r_1=1}^{R_1} \dots \sum_{r_K=1}^{R_K} g_{r_1 \dots r_K} \uu[1]{r_1} \circ  \cdots  \circ \uu[K]{r_K},
\end{equation}
where $\uu[k]{r_k} =(\uu[k]{1r_k}, \dots, \uu[k]{n_kr_k})^{\top}$, $1 \le k \le K$, and $\G =\{g_{r_1 \dots r_K}, \ r_k = 1, \dots, R_K\}$ is referred to as the core tensor. 
The symbol \lq $\circ$' represents the vector outer product.
Let $\UU{k} = [\uu[k]{r_1},  \dots, \uu[k]{r_K}]$, with $k = 1, \dots, K$, denotes the factor matrix of the $k$-th mode of the tensor. With the Tucker decomposition as given in \eqref{eq:tuck_intro} the tensor element $z_{\mathbf{i}}$, with $\mathbf{i}=[i_1 \dots i_K]$ its $K$-dimensional index vector, can be concisely represented by
\begin{equation} \label{eq:Tuck_elmw}
   z_{\mathbf{i}} =  \sum_{r_1=1}^{R_1} \dots \sum_{r_K=1}^{R_K} g_{r_1 \dots r_K} \prod_{k=1}^K \uu[k]{i_k r_k}.
\end{equation}

In a low-rank framework approach we assume that the elements composing the multi-rank $R_1,  \dots, R_K$ are much smaller than $n_1, \dots, n_K$, yielding a parsimonious representation of $\Z$.
Denoting with vec($\Z$) the result vector of size $n = \prod_{i=1}^K n_i$ from the vectorization of $\Z$, the above Tucker decomposition can be written as
\begin{equation*}
    \text{vec}(\Z) = (\UU{K} \otimes \cdots \otimes \UU{1} ) \text{vec}(\G),
\end{equation*}
where \lq $\otimes$' denotes the Kronecker product and vec$(\T)$ is the vectorization of the tensor $\T$.
If one considers $\UU{k}$ as factor loadings matrix and $g_{r_1 \dots r_K}$ to be the corresponding coefficients, then the Tucker decomposition may be thought of as a multiway analog to factor models.

The PARAFAC decomposition is a constrained version of the Tucker, which restricts the core tensor $\G$ to be superdiagonal with $R_1 = \dots = R_K = R$, factorizing the tensor elements $z_{\mathbf{i}}$ as
\begin{equation*} 
       z_{\mathbf{i}} =  \sum_{r=1}^{R} \lambda_{r} \prod_{k=1}^K \uu[k]{i_k r},
\end{equation*}
with $\lambda_r \ge 0$. This structure imposes that the $r$-th component in one factor matrix can only interact with the $r$-th components in the other factor matrices.

\subsection{Bayesian model specification} \label{sec:modeldescr}
In this section we introduce the proposed Bayesian adaptive Tucker factorization model. 
We focus on a normal likelihood for the noise $\mathcal{E}$, and specifically vec$(\mathcal{E}) \sim N_n(0, \sigma^2 I_n)$, but we will generalize it to the case of binary responses in Section \ref{sec:binProbit}. Therefore, the model likelihood is $\pr(\Y \mid \Z) = \prod_{\mathbf{i} \in I} N(y_{\mathbf{i}} ; z_{\mathbf{i}}, \sigma^2)$, where $I$ is the index set of all observations. Section \ref{sec:binProbit} discusses an extension for binary tensor data.   Marginalizing out the core tensor $\G$, the distribution over the entries of $\Y$ is still Gaussian and can be written as 
\begin{equation}
\label{eq:vecmodel}
     \tvec(\Y) \mid W \sim N_n(0, W \Sigma_g W^{\top} + \sigma^2I_n), 
\end{equation}
where $W = \UU{K} \otimes \cdots \otimes \UU{1}$ and $\Sigma_g = \cov\{\tvec(\G)\} $.

To adaptively learn the latent multi-rank of $\Z$, we propose a new class of priors that applies joint shrinkage of the factors along each mode of the tensor. Our construction builds on the CUSP \citep{CUSP}, originally proposed in the context of factor model for matrix data and successfully extended in different contexts \citep{fruhwirth2023generalized, kowal2023semiparametric, anceschi2024bayesianjointadditivefactor}. Crucially, our contribution differs from these generalizations by specifically targeting multiway data. 
For $k = 1, \dots, K,$ $r_k \in \N,$ and $i_k = 1,\dots, n_k$ we assume for the entries of the factor matrices $\uu[k]{i_kr_k} \sim N(0, \theta^{(k)}_{r_k})$  with
\begin{align} \label{eq:nu}
    &\theta_{r_k}^{(k)} \mid \pi_{r_k}^{(k)} \sim (1-\pi_{r_k}^{(k)}) \text{InvGa}(a_{\theta}, b_{\theta}) + \pi_{r_k}^{(k)} \delta_{\theta_{\infty}}, \\
    &  \pi_{r_k}^{(k)} = \sum_{l=1}^{r_k} \omega_{l}^{(k)}, \quad \omega_{l}^{(k)} = V_{l}^{(k)} \sum_{m=1}^{l-1} (1- V_{m}^{(k)}),  \quad
    V_{l}^{(k)} \stackrel{iid}{\sim }\mathrm{Beta}(1, \alpha^{(k)}),
    \nonumber
\end{align}
where $\mathrm{InvGa}(a,b)$ is the inverse gamma distribution with shape $a$ and scale $b$ parameters, $\delta_x$ is the Dirac delta mass at point $x$. 
Within this specification, the countable sequence of parameters $\theta^{(k)} = \{\theta^{(k)}_{r_k}:r_k \in \N \}$   controls the shrinkage for the elements in the $k$-th mode of the tensor. Employing a stick-breaking construction of the Dirichlet process for the $ \pi_{r_k}^{(k)}$ ensures that the probability assigned to the spike $\theta_{\infty}$ increases with the model dimension $r_k$ and that $\lim_{r_k \to \infty} \pi_{r_k}^{(k)} = 1$ for each $k=1,\dots,K$.
This multiway CUSP prior will  jointly shrink $\uu[k]{i_k r_k}$ towards zero as $r_k$ increases for all tensor dimensions. To allow effective shrinkage of redundant factor matrices elements we set $\theta_{\infty}$ close to zero. The following proposition characterizes the expected value \emph{a priori} of the latent multi-rank under the AdaTuck model.

\begin{proposition} \label{prop:expR}
    Define $c_{r}^{(k)} \mid \pi_{r}^{(k)} \sim Ber(1- \pi_{r}^{(k)})$ and 
    $\theta_{r}^{(k)} \mid c_{r}^{(k)} \sim c_{r}^{(k)} \mathrm{InvGa}(a_{\theta}, b_{\theta}) + (1-c_{r}^{(k)})\delta_{\theta_{\infty}}$    
    with $\pi_{r}^{(k)}$ as defined in \eqref{eq:nu} and let $R^*_k = \sum_{r=1}^{\infty} E(c^{(k)})$ denote the number of active elements of margin k for $k=1,\dots,K$. Then $E\{(R^*_1, \dots, R^*_K)\} = (\alpha^{(1)}, \dots, \alpha^{(K)})$.
\end{proposition}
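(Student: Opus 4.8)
The plan is to exploit independence across modes so that the vector identity collapses into $K$ identical scalar computations, one per margin, and then to evaluate $E(R_k^*)=\alpha^{(k)}$ by a stick-breaking telescoping argument followed by summation of a geometric series. Since the expectation of a random vector is the vector of its componentwise expectations, and since the factors and breaks of distinct modes are mutually independent, it suffices to fix $k$ and show $E(R_k^*)=\alpha^{(k)}$, where $R_k^*=\sum_{r=1}^{\infty} c_r^{(k)}$ counts the active elements of margin $k$.

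First I would pass from the active-element counts to the spike probabilities. By the tower property and $c_r^{(k)}\mid \pi_r^{(k)} \sim Ber(1-\pi_r^{(k)})$, we have $E(c_r^{(k)}) = E\{E(c_r^{(k)}\mid \pi_r^{(k)})\} = E(1-\pi_r^{(k)})$. Because every summand is nonnegative, Tonelli's theorem licenses interchanging the infinite sum with the expectation, yielding $E(R_k^*)=\sum_{r=1}^{\infty} E(1-\pi_r^{(k)})$.

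The key step is the stick-breaking identity $1-\pi_r^{(k)} = \prod_{m=1}^{r}(1-V_m^{(k)})$. This follows from $\pi_r^{(k)}=\sum_{l=1}^{r}\omega_l^{(k)}$ together with the stick-breaking weights $\omega_l^{(k)} = V_l^{(k)}\prod_{m=1}^{l-1}(1-V_m^{(k)})$ by a short induction: the recursion $\pi_r^{(k)} = \pi_{r-1}^{(k)} + V_r^{(k)}(1-\pi_{r-1}^{(k)})$ gives $1-\pi_r^{(k)} = (1-\pi_{r-1}^{(k)})(1-V_r^{(k)})$, and the partial sums telescope to the stated product. Since the $V_m^{(k)}$ are independent with $V_m^{(k)}\sim \mathrm{Beta}(1,\alpha^{(k)})$, so that $E(1-V_m^{(k)}) = \alpha^{(k)}/(1+\alpha^{(k)})$, independence converts the expectation of the product into the product of expectations, giving $E(1-\pi_r^{(k)}) = \{\alpha^{(k)}/(1+\alpha^{(k)})\}^{r}$.

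Finally I would sum the resulting geometric series. Writing $\rho_k = \alpha^{(k)}/(1+\alpha^{(k)}) \in (0,1)$, we obtain $E(R_k^*) = \sum_{r=1}^{\infty}\rho_k^{r} = \rho_k/(1-\rho_k) = \alpha^{(k)}$; collecting the $K$ margins then yields the claimed vector identity $E\{(R_1^*,\dots,R_K^*)\} = (\alpha^{(1)},\dots,\alpha^{(K)})$. The computation is essentially routine once the telescoping identity is in hand, and I do not anticipate a serious obstacle: the only point requiring genuine care is recognizing that the partial sums of $\omega_l^{(k)}$ collapse to the leftover-stick product form for $1-\pi_r^{(k)}$, which is precisely what turns an unwieldy sum into a clean geometric term, together with the (standard) nonnegativity justification for exchanging summation and expectation.
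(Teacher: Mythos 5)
Your proof is correct and follows essentially the same route the paper relies on: the identity $E(1-\pi_{r}^{(k)})=\prod_{m=1}^{r}E(1-V_{m}^{(k)})=\{\alpha^{(k)}/(1+\alpha^{(k)})\}^{r}$, inherited from the CUSP construction and used explicitly in the paper's proofs of Theorem 1 and Lemma 1, followed by summing the geometric series to obtain $\alpha^{(k)}$ for each margin. Your added care about Tonelli's theorem and the telescoping derivation of $1-\pi_{r}^{(k)}=\prod_{m=1}^{r}(1-V_{m}^{(k)})$ simply makes explicit what the paper takes as known.
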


Proposition \ref{prop:expR} provides a simple and intuitive formulation for the expected value of the latent multi-rank, yielding useful insights for the choice of the hyperparameters $\alpha^{(k)}$. Thus, $\alpha^{(k)}$ could be set to the expected numbers of active components of the $k$-th mode for $k=1,\dots,K$.

This prior involves infinitely many columns for each factor matrix $U^{(k)}$, implying an infinite dimensional core-tensor. However, for practical reasons $\theta^{(k)}$ in \eqref{eq:nu} can be restricted  to finitely many terms, i.e. $\theta^{(k)} = (\theta^{(k)}_1, \dots, \theta^{(k)}_{R_k})$, by letting $V^{(k)}_{R_k} = 1$, resulting in a core tensor of dimension $R_1 \times \dots \times R_K$.  
Truncation is standard practice in nonparametric Bayesian models involving infinitely-many components, ranging from mixture models \citep{ishwaran2001gibbs} to  infinite factor models \citep{Bhattacharyadunson11}.  The following theorem characterizes the bound on the approximation error for the truncated representation of the Tucker factorization, offering theoretical support for this choice.

\begin{theorem} \label{thm:truncation}
    Let $M_{\textbf{i}}^{R_1,\dots,R_K} = \sum_{\mathbf{r}, r_k = R_k+1}^{\infty} g_{\mathbf{r}} \prod_{k=1}^K \uu[k]{i_k r_k} $ be the residuals for a generic index vector $\mathbf{i}$. Assume 
    $E(g_{\mathbf{r}}^2) \leq \zeta$ with $\zeta$ finite.
    Then, $\forall \epsilon>0$,   
    $\pr\{(M_{\textbf{i}}^{R_1, \dots, R_K})^2 > \epsilon\} < (\zeta/\epsilon)  \prod_{k=1}^K \alpha^{(k)}\{\alpha^{(k)}(1+\alpha^{(k)})^{-1}\}^{R_k} $
    for each $i \in I$.
\end{theorem}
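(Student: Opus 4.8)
The plan is to control the tail through a second-moment bound combined with Markov's inequality. Since $(M_{\mathbf i}^{R_1,\dots,R_K})^2\ge 0$, for every $\epsilon>0$ we have $\pr\{(M_{\mathbf i}^{R_1,\dots,R_K})^2>\epsilon\}\le \epsilon^{-1} E[(M_{\mathbf i}^{R_1,\dots,R_K})^2]$, so the entire task reduces to showing that the mean squared residual $E[(M_{\mathbf i}^{R_1,\dots,R_K})^2]$ is bounded by $\zeta\prod_{k=1}^K\alpha^{(k)}\{\alpha^{(k)}(1+\alpha^{(k)})^{-1}\}^{R_k}$.

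Next I would expand the square,
\[
E[(M_{\mathbf i}^{R_1,\dots,R_K})^2]=\sum_{\mathbf r}\sum_{\mathbf r'} E\Big(g_{\mathbf r}g_{\mathbf r'}\prod_{k=1}^K \uu[k]{i_k r_k}\uu[k]{i_k r_k'}\Big),
\]
where both indices run over the corner tail $r_k,r_k'>R_k$. Using the mutual independence of the core entries and the factor entries, together with the fact that each $\uu[k]{i_k r_k}$ is conditionally $N(0,\theta^{(k)}_{r_k})$ and hence mean zero, every off-diagonal term with $\mathbf r\neq\mathbf r'$ vanishes: any such pair differs in some coordinate $j$, and $E(\uu[j]{i_j r_j}\uu[j]{i_j r_j'})=0$ by independence across the stick-breaking index. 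Only the diagonal survives, giving, by Tonelli since all summands are nonnegative,
\[
E[(M_{\mathbf i}^{R_1,\dots,R_K})^2]=\sum_{\mathbf r} E(g_{\mathbf r}^2)\prod_{k=1}^K E\big[(\uu[k]{i_k r_k})^2\big]\le \zeta\sum_{\mathbf r}\prod_{k=1}^K E(\theta^{(k)}_{r_k}),
\]
after inserting $E(g_{\mathbf r}^2)\le\zeta$ and $E[(\uu[k]{i_k r_k})^2]=E(\theta^{(k)}_{r_k})$. The corner-tail sum then factorizes as $\prod_{k=1}^K\sum_{r_k>R_k}E(\theta^{(k)}_{r_k})$.

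It remains to evaluate each one-dimensional tail sum. With the spike $\theta_\infty$ taken to its effective value zero and the slab normalized to unit mean, $E(\theta^{(k)}_{r_k})=E(1-\pi^{(k)}_{r_k})$; more generally one bounds $E(\theta^{(k)}_{r_k})\le C\,E(1-\pi^{(k)}_{r_k})$ and absorbs $C$ into $\zeta$. The key identity is that the stick-breaking weights telescope, $1-\pi^{(k)}_{r}=1-\sum_{l=1}^{r}\omega^{(k)}_l=\prod_{m=1}^{r}(1-V^{(k)}_m)$, so that by the independence and $\mathrm{Beta}(1,\alpha^{(k)})$ law of the $V^{(k)}_m$ we obtain $E(1-\pi^{(k)}_{r})=\{\alpha^{(k)}(1+\alpha^{(k)})^{-1}\}^{r}$. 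Summing the resulting geometric series over $r>R_k$ gives $\sum_{r>R_k}\{\alpha^{(k)}/(1+\alpha^{(k)})\}^{r}=\alpha^{(k)}\{\alpha^{(k)}(1+\alpha^{(k)})^{-1}\}^{R_k}$, and taking the product over $k$ together with Markov's inequality yields exactly the claimed bound.

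I expect the main obstacle to be the orthogonality step rather than the geometric computation: one must argue carefully that the off-diagonal expectations vanish under the full joint law, integrating out the random variances $\theta^{(k)}_{r_k}$ and the core $\G$ simultaneously, which rests on the conditional independence of the factor entries given the $\theta^{(k)}$ and their independence from $\G$. Once that orthogonality is established, the reduction to $E(\theta^{(k)}_{r_k})$ and the telescoping stick-breaking identity are routine; the only modeling subtlety is the treatment of the spike mass $\theta_\infty$, which must be at, or shrunk to, zero for the tail sum $\sum_{r_k>R_k}E(\theta^{(k)}_{r_k})$ to converge to the stated closed form.
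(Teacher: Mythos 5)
Your proof is correct and follows essentially the same route as the paper's: Markov's inequality, reduction of $E\{(M_{\mathbf{i}}^{R_1,\dots,R_K})^2\}$ to the diagonal terms via (conditional) independence and mean-zero factor entries, the identity $E\{(\uu[k]{i_k r_k})^2\}=E(\theta^{(k)}_{r_k})=\{\alpha^{(k)}(1+\alpha^{(k)})^{-1}\}^{r_k}$ (which the paper imports by citing the CUSP paper rather than deriving via the telescoping stick-breaking identity as you do), and the same geometric tail computation $\sum_{r_k>R_k}\{\alpha^{(k)}(1+\alpha^{(k)})^{-1}\}^{r_k}=\alpha^{(k)}\{\alpha^{(k)}(1+\alpha^{(k)})^{-1}\}^{R_k}$. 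Your closing caveat---that the spike must effectively sit at $\theta_\infty=0$ and the slab mean be normalized for the tail sum to converge to the exact stated constant---is a point the paper's proof passes over silently, so flagging it is a small improvement rather than a deviation.
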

Theorem \ref{thm:truncation}  shows that the approximation error for the truncated representation decreases exponentially fast as the truncation level tends to infinity. This suggests that the infinite-dimensional prior in \eqref{eq:nu} is accurately approximated by a conservative truncation. The assumption on the finite second moment for the elements of the core tensor is met with the prior we will use for the elements of $\G$, as defined in the following of this section.

 Often it is also important  to shrink the elements of the core tensor towards zero to reduce model complexity and achieve accurate but regularized estimates of the elements of $\G$.
Introducing local sparsity in the core tensor entails that only summands within the Tucker decomposition that explain additional variance are included in the model. 
To induce local shrinkage towards the elements of the core tensor, we assume that the elements of $\G$ follow a generalized double-Pareto prior \citep{armagan2013generalized} 
\begin{equation} \label{eq:corePar}
    g_{\textbf{r}} \sim N(0, \tau\nu_{\mathbf{r}}), \quad \tau \sim \mbox{Ga}(a_{\tau}, b_{\tau}), \quad  \nu_{\mathbf{r}} \sim \mbox{Exp}(\rho_\mathbf{r}^2/2), \quad  \rho_\mathbf{r} \sim \mbox{Ga}(a_{\rho}, b_{\rho}),
\end{equation}
where $\mbox{Ga}(a,b)$ is a gamma distribution with mean $a/b$ and variance $a/b^2$,  with $\mathbf{r} = [r_1 \dots r_K]$ a $K$-dimensional index vector. 
This is equivalent to assume $ \tvec(\G) \sim N_R(0, \Psi)$, with $\Psi = \text{diag}(\psi_{1\cdots1}, \dots, \psi_{R_1 \cdots R_K})$, $\psi_{r_1 \cdots r_K} = \tau\nu_{r_1 \cdots r_K}$ and $R = \prod_{i=1}^K R_i$.
Integrating over the element-specific scale parameters reduces the prior on $g_{\textbf{r}}$ to a double-exponential (Laplace) distribution centered at $0$ with scale parameter $\rho_\mathbf{r}/\sqrt{\tau}$,  which has heavier tails than the Gaussian distribution while also allocating higher densities around
zero. This produces an analog to the adaptive LASSO as discussed by \cite{armagan2013generalized}.

The proposed prior structure has positive support on both the Tucker and the PARAFAC decomposition, thus representing a bridge between the two models. 
To better understand the implications of this choice, we can look at the covariance between the elements of $\Z$ that arises from the proposed model. The covariance between any two elements $z_{\mathbf{i}}$ and $z_{\mathbf{j}}$ conditioned on the factor matrices is
\begin{equation} \label{eq:corr}
    \text{Cov}(z_{\mathbf{i}},z_{\mathbf{j}} \mid \{U^{(k)}\}_{k=1}^K) = \sum_{r_1=1}^{R_1} \cdots  \sum_{r_K}^{R_K} \psi_{r_1 \cdots r_K} \prod_{k=1}^K \uu[k]{i_k r_k} \uu[k]{j_k r_k}.
\end{equation}
The local sparsity allows for the possibility in \eqref{eq:corr} that some factors are not correlated, while not imposing this structure. Conversely, the analog covariance structure in the PARAFAC decomposition imposes independence among all factors. 
Overall, our proposed approach has the advantage of promoting a more effective data compression than a PARAFAC, but at the same time, it is a parsimonious representation as many entries of the core tensor shrink toward zero.  

The hierarchical Bayesian model specification is completed specifying a prior distribution for the error variance in \eqref{eq:vecmodel}. For simplicity we let $\sigma^2 \sim \text{InvGa}(a_{\sigma}, b_{\sigma})$. 
When $\Y$ has $m<n$ missing entries, the likelihood can be easily adapted from  \eqref{eq:vecmodel} by removing, from the covariance matrix, the associated $m$ rows and columns. In this case, our Bayesian approach seamlessly allows to perform missing data imputation by simulating from the posterior predictive distribution. 
Extending the cumulative shrinkage process framework to predict missing values is a key contribution that sets apart our work from the original CUSP \citep{CUSP} for matrix data.  See Sections \ref{sec:simulations} for compelling results on tensor completion using synthetic data, as well as an image reconstruction application.

\subsection{Extension for binary tensor data} \label{sec:binProbit}
AdaTuck can be easily adapted to the case of binary tensor data exploiting a probit model. We refer to this specification as probit-AdaTuck.
In Section \ref{sec:fish} we will apply the probit-AdaTuck model for analyzing spatio-temporal species co-occurrence data in a biodiversity study in the North Sea. Other examples of binary multiway data include multirelational social networks \citep{nickel2011three} and connection/disconnection in
brain structural connectivity networks \citep{wang2019common}.

The probit AdaTuck model is defined through a latent  construction. Let $\Q \in \R^{n_1 \times \cdots \times n_K}$ denote a real valued tensor having the same dimensions as $\Y$, with tensor elements $q_{\mathbf{i}}$ for $\mathbf{i} \in I$ and let $y_{\mathbf{i}}$ be $0$ or $1$ according to the sign of $q_{\mathbf{i}}$. We assume
\begin{equation} \label{eq:binaryTuck}
    y_{\mathbf{i}} = \mathds{1}(q_{\mathbf{i}}>0), \quad   \Q = \Z+ \mathcal{E}, \quad \Z = \sum_{r_1=1}^{R_1} \dots \sum_{r_K=1}^{R_K} g_{r_1 \dots r_K} \uu[1]{r_1} \circ  \cdots  \circ \uu[K]{r_K},
\end{equation}
where $\tvec(\mathcal{E}) \sim N_n(0, \sigma^2 I_n)$ and we assume $\sigma^2=1$ as standard practice for probit likelihood. 
For the factor matrices and the core tensor we consider the same prior distributions defined in \eqref{eq:nu} and \eqref{eq:corePar}.
Integrating out the latent variables $q_{\mathbf{i}}$ we have $y_{\mathbf{i}} \sim \mathrm{Bernoulli}(\pi_{\mathbf{i}})$ with
\begin{equation} \label{eq:pimarginal}
    \pr(y_{\mathbf{i}}=1 \mid z_{\mathbf{i}}) =  \pi_{\mathbf{i}} = \Phi(z_{\mathbf{i}}),
\end{equation}
where $\Phi(\cdot)$ is the Gaussian cumulative distribution function and $\pi_{\mathbf{i}}$ is an element of the probability tensor $\mathcal{M}$ with dimension $n_1 \times \cdots \times n_K$.
The latent variable construction of the probit model in \eqref{eq:binaryTuck} is particularly useful for posterior inference, as it enables the use of an efficient data augmentation scheme \citep{albert1993bayesian}.

\subsection{Consistency in rank selection}

One of the key aspects of AdaTuck is its ability to learn the rank along  each  dimension, through the multiway increasing shrinkage prior. Consistent with this characteristic, in this section, we investigate the asymptotic properties of the adopted multiway CUSP in a simplified setting. This analysis extends the discussion in Section 3.3 of  \cite{kowal2023semiparametric} and relates to the problem of estimating a high-dimensional mean vector from a single multivariate observation \citep{castillo2012, sparseRockova}. While these theoretical results are derived under simplified assumptions and do not directly apply to the situations discussed in previous sections, they provide valuable insights for the proposed approach. The asymptotic analysis serves as a principled justification for the multiway CUSP’s ability to discard superfluous dimensions in controlled settings, offering theoretical support for its rank-learning capability. The empirical results presented in Section~\ref{sec:illustration}, which focus on the actual model discussed earlier, demonstrate that the method maintains its consistency in rank estimation, even when applied beyond the simplified framework of this section.

Let $y=(y_1, \dots, y_n)$ with 
\begin{equation} \label{eq:simplelik}
    y_i = \prod_{k=1}^K \uu[k]{i} + \varepsilon_i, \quad \varepsilon_i \sim N(0,1), \quad u_{i}^{(k)} \sim N(0, \theta^{(k)}_{i}),
\end{equation}
with $\theta^{(k)}_{i}$ following the prior structure in \eqref{eq:nu}. While this setting differs from the Tucker decomposition model, we observe that the above likelihood shares some similarities with the proposed models as it corresponds to the multiplicative term in the definition of the elementwise Tucker decomposition in \eqref{eq:Tuck_elmw}.
Hence $y_i \sim N(\eta_i,1)$ for $i=1,2,\dots$ with $\eta_i = \prod_{k=1}^Ku_i^{(k)}$.
We assume that the true data generating process is 
\begin{equation} \label{eq:truelik}
    y_i =  \prod_{k=1}^K u^{0(k)}_{i} + \varepsilon_i,  \quad \varepsilon_i \sim N(0,1), 
\end{equation}
where $u^{0(k)}_{1}, \dots, u^{0(k)}_{n}$ are the true parameters and $R_k^{0n}$ denote the true number of non-zero elements for $k=1,\dots,K$. 
In this setting $u^{0(k)}_{i} \neq 0$ for $i \le \Tilde{R}^{0n}$ and $u^{0(k)}_{i} = 0$ for $i > \Tilde{R}^{0n}$, with $\Tilde{R}^{0n}= \mathrm{min}_{k \in \{1,\dots,K\}} R_k^{0n}$.
We allow the quantities $R_k^{0n}$, that correspond to the true ranks composing the Tucker multi-rank $(R_1^{0n}, \dots ,R_K^{0n})$, to grow with the dimension of the tensor $n$, while $K$ is fixed. 
Taking the minimum among the $K$ ranks ensures the identification of the number of non-zero elements under the data generating process \eqref{eq:simplelik} even though the factors in the product of $u_i^{0(k)}$ remain unidentifiable.

Define the quantity $M_n = \sum_{i \ge \Tilde{R}^{0n}} \prod_{k=1}^K \omega_{i}^{(k)}$, which represents an upper bound for the probability mass assigned to the multiway CUSP slabs for the parameters that are null.
First, we want to show that $M_n$ is far from zero with vanishing probability \emph{a priori}.

\begin{lemma} \label{lm:prior}
    Let $\epsilon_n \to 0$ with $\epsilon_n^{1/\Tilde{R}^{0n}} > \alpha^{(k)}/(1+\alpha^{(k)})$ for any $k=1\dots,K$. For a positive constant $C$, the remainder $M_n$ satisfies $\pr(M_n > \epsilon_n) \le \exp (-CK\Tilde{R}^{0n})$.
\end{lemma}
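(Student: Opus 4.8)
The plan is to dominate the joint tail mass $M_n$ by a single-mode tail, turn that into a lower-tail deviation for a sum of independent exponentials, and close with a Chernoff bound whose rate is made strictly positive by the hypothesis on $\epsilon_n$.

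First I would use the stick-breaking tail identity $\sum_{i \ge R}\omega_i^{(k)} = \prod_{m=1}^{R-1}(1-V_m^{(k)})$ with $R = \tilde{R}^{0n}$, together with the trivial bound $\prod_{k=1}^K \omega_i^{(k)} \le \omega_i^{(k)}$ (each weight lies in $[0,1]$). Summing over $i \ge \tilde{R}^{0n}$ gives $M_n \le \prod_{m=1}^{\tilde{R}^{0n}-1}(1-V_m^{(k)})$ for every mode $k$ simultaneously, so the event $\{M_n > \epsilon_n\}$ is contained in each $\{\prod_{m=1}^{\tilde{R}^{0n}-1}(1-V_m^{(k)}) > \epsilon_n\}$. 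Since the $V_m^{(k)}$ are independent across $k$, this yields $\pr(M_n > \epsilon_n) \le \prod_{k=1}^K \pr\{\prod_{m=1}^{\tilde{R}^{0n}-1}(1-V_m^{(k)}) > \epsilon_n\}$.

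Next, for each fixed $k$ I would pass to logarithms: since $V_m^{(k)} \sim \mathrm{Beta}(1,\alpha^{(k)})$, the variables $X_m^{(k)} = -\log(1-V_m^{(k)})$ are i.i.d. $\mathrm{Exp}(\alpha^{(k)})$, hence $S^{(k)} = \sum_{m=1}^{\tilde{R}^{0n}-1} X_m^{(k)} \sim \mathrm{Ga}(\tilde{R}^{0n}-1,\alpha^{(k)})$ and the $k$-th factor equals $\pr\{S^{(k)} < a_n\}$ with $a_n = -\log\epsilon_n$. The optimized Chernoff bound for the lower tail of a gamma gives $\pr\{S^{(k)} < a_n\} \le \exp\{(\tilde{R}^{0n}-1)(1-\rho_k+\log\rho_k)\}$ with $\rho_k = \alpha^{(k)} a_n/(\tilde{R}^{0n}-1)$, valid whenever $a_n$ is below the mean. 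The function $\rho \mapsto 1-\rho+\log\rho$ is negative and increasing on $(0,1)$, so the whole argument reduces to showing that $\rho_k$ is bounded strictly below $1$.

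This last calibration is where the hypothesis enters and is the main obstacle. Rewriting $\epsilon_n^{1/\tilde{R}^{0n}} > \alpha^{(k)}/(1+\alpha^{(k)})$ as $a_n/\tilde{R}^{0n} < \log(1+1/\alpha^{(k)})$ gives $\rho_k < \{\tilde{R}^{0n}/(\tilde{R}^{0n}-1)\}\,\alpha^{(k)}\log(1+1/\alpha^{(k)})$; the elementary inequality $\log(1+x) < x$ forces $\alpha^{(k)}\log(1+1/\alpha^{(k)}) < 1$, and since the prefactor tends to $1$ as $n \to \infty$, for $n$ large $\rho_k \le \rho_k^\ast$ for some constant $\rho_k^\ast < 1$ (which also certifies $a_n$ below the mean, validating the Chernoff step). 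Setting $c_k = \rho_k^\ast - 1 - \log\rho_k^\ast > 0$ gives $1 - \rho_k + \log\rho_k \le -c_k$, so multiplying the $K$ factors yields $\pr(M_n > \epsilon_n) \le \exp\{-(\tilde{R}^{0n}-1)\sum_{k=1}^K c_k\}$. Finally, bounding $\sum_k c_k \ge K\min_k c_k$ and $\tilde{R}^{0n}-1 \ge \tilde{R}^{0n}/2$ delivers $\pr(M_n > \epsilon_n) \le \exp(-CK\tilde{R}^{0n})$ with $C = \tfrac12\min_k c_k$. The delicacy is entirely in this strict gap between $\log(1+1/\alpha^{(k)})$ and the per-stick mean $1/\alpha^{(k)}$: it is exactly what makes the large-deviation rate positive, and without it the product could exceed $\epsilon_n$ with non-vanishing probability.
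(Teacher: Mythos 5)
Your proposal is correct, but it takes a genuinely different route from the paper. The paper's proof is a two-line first-moment argument: it bounds $M_n$ by the product $\prod_{l}\prod_{k=1}^K(1-V_l^{(k)})$ over \emph{all} modes jointly, computes its prior expectation $\prod_{k=1}^K\{\alpha^{(k)}/(1+\alpha^{(k)})\}^{\tilde R^{0n}}$ using independence and $E(1-V_l^{(k)})=\alpha^{(k)}/(1+\alpha^{(k)})$, and then applies Markov's inequality, folding $\epsilon_n^{-1}$ into the product so that the hypothesis $\epsilon_n^{1/\tilde R^{0n}}>\alpha^{(k)}/(1+\alpha^{(k)})$ makes each factor strictly less than one. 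You instead bound $M_n$ by the single-mode tail $\prod_{m=1}^{\tilde R^{0n}-1}(1-V_m^{(k)})$ for every $k$ separately, use the inclusion $\{M_n>\epsilon_n\}\subseteq\bigcap_k\{\text{mode-}k \text{ tail}>\epsilon_n\}$ together with independence \emph{across} modes to recover the factor $K$, and then control each mode by the exponential/gamma representation of the stick-breaking tail plus an optimized Chernoff lower-tail bound, calibrated via $\alpha\log(1+1/\alpha)<1$. What each buys: the paper's route is shorter and entirely elementary, but to read off a genuine constant $C$ it implicitly needs the ratio $\epsilon_n^{-1/\tilde R^{0n}}\alpha^{(k)}/(1+\alpha^{(k)})$ bounded away from $1$ uniformly in $n$, a uniform gap the stated hypothesis does not literally supply and which the paper glosses over; your large-deviation route is sharper and delivers a rate $c_k$ depending only on $\alpha^{(k)}$ under the hypothesis exactly as stated, at the price of requiring $\tilde R^{0n}$ large enough that the prefactor $\tilde R^{0n}/(\tilde R^{0n}-1)$ is close to $1$ (and $\tilde R^{0n}\ge 2$ for the gamma shape to be positive) --- harmless in the paper's asymptotic regime, where the ranks grow with $n$ and the conclusion is vacuous otherwise, but worth stating explicitly.
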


According to Lemma \ref{lm:prior}, the prior probability that $M_n$ exceeds a small threshold is exponentially small. To connect the prior behavior in Lemma \ref{lm:prior} to the posterior we denote with $\pr_0(\cdot)$ and $E_0(\cdot)$ the probability and the expectation respectively under the true distribution of the data, described in  \eqref{eq:truelik}. Theorem \ref{thm:consistency} shows that the posterior probability that $M_n$  exceeds a shrinking $\epsilon_n$ for $n \to \infty$ goes to zero in expectation thus providing theoretical justification for the use of the proposed multiway CUSP to learn the ranks of model \eqref{eq:tuck_intro}. Intuitively, this implies that the posterior distribution asymptotically concentrates on models where redundant parameters---those not contributing to the true data-generating structure---are discarded, ensuring consistent recovery of the tensor ranks as the sample size increases. 

\begin{theorem} \label{thm:consistency}
     Let $\epsilon_n \to 0$ with $\epsilon_n^{1/\Tilde{R}^{0n}} > \alpha^{(k)}/(1+\alpha^{(k)})$ for any $k=1\dots,K$. We assume $A\Tilde{R}^{0n}/n<1/2$ with $A>1/2$. Let $\theta_\infty = \theta_{\infty}(n)$ in the multiway CUSP prior \eqref{eq:nu} with $\theta_\infty(n) <  ({\tilde R}^{n0}/n)^{1/(2K)}$. Then, $\lim_{n \to \infty} E_0\{\pr(M_n > \epsilon_n \mid y_1,\dots, y_n)\} = 0.$
\end{theorem}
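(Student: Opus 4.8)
The plan is to control the posterior probability through the standard evidence-ratio representation and to use Lemma \ref{lm:prior} for the numerator. Writing $\Lambda(\vartheta) = \prod_{i}\phi(y_i-\eta_i)/\phi(y_i-\eta_{0i})$ for the likelihood ratio between a generic configuration $\vartheta$ (collecting the sticks $V^{(k)}$, the indicators, the scales $\theta^{(k)}$ and the factors $\uu[k]{i}$) and the truth $\vartheta_0$, the event of interest satisfies
\begin{equation*}
\pr(M_n > \epsilon_n \mid y_1,\dots,y_n) = \frac{\int \mathds{1}(M_n > \epsilon_n)\,\Lambda(\vartheta)\,d\Pi(\vartheta)}{\int \Lambda(\vartheta)\,d\Pi(\vartheta)} =: \frac{N_n}{D_n}.
\end{equation*}
Since $M_n$ is a measurable function of the sticks only and $E_0\{\Lambda(\vartheta)\}=1$ for every fixed $\vartheta$ (it is a ratio of densities integrated against the true density), Fubini gives $E_0(N_n)=\int \mathds{1}(M_n>\epsilon_n)\,d\Pi(\vartheta)=\pr(M_n>\epsilon_n)$, which by Lemma \ref{lm:prior} is at most $\exp(-CK\tilde R^{0n})$. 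Thus the numerator is exponentially small in expectation with no further work.

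The core of the argument is a high-probability lower bound on the evidence $D_n$. Introducing $\Omega_n=\{D_n\ge d_n\}$ and using $N_n\le D_n$, the splitting $E_0(N_n/D_n)\le d_n^{-1}E_0(N_n)+\pr_0(\Omega_n^c)$ reduces the theorem to exhibiting $d_n$ with $\pr_0(\Omega_n^c)\to 0$ and $-\log d_n < CK\tilde R^{0n}$, the gap diverging, so that $d_n^{-1}\exp(-CK\tilde R^{0n})\to 0$. To build such a bound I would condition on the sticks $V^{(k)}$, so that given $V$ the coordinates $i$ are independent, and split the resulting product over the signal indices $i\le \tilde R^{0n}$ and the null indices $i>\tilde R^{0n}$.

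For the signal block I would restrict the factors $\uu[k]{i}$ to fixed-radius balls around $u^{0(k)}_i$ within the slab, on a set of sticks for which the slab probabilities $1-\pi_i^{(k)}$ are bounded below; this contributes order $-K\tilde R^{0n}$ to $\log D_n$ from the prior mass, while a Ghosal--van der Vaart type second-moment bound applied to this finite block controls the stochastic term $\sum_{i\le\tilde R^{0n}}\delta_i\varepsilon_i$ (with $\delta_i=\eta_i-\eta_{0i}$) uniformly, since here $\sum\delta_i^2\asymp\tilde R^{0n}\to\infty$. For the null block, crucially, I would not force $\eta_i$ small pointwise---that costs $O(n\theta_\infty^K)$ and destroys the bound---but integrate each $\Lambda_i$ against the spike. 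A second-order expansion gives $\log\prod_{i>\tilde R^{0n}}E_{\uu[k]{i}\mid V}(\Lambda_i)=\tfrac12\theta_\infty^K\sum_{i>\tilde R^{0n}}(y_i^2-1)+\text{(higher order)}$, whose leading term is mean-zero under $\pr_0$ and of size $O_{\pr_0}(\theta_\infty^K\sqrt n)$. The assumption $\theta_\infty(n)<(\tilde R^{0n}/n)^{1/(2K)}$ is exactly what turns this into $O_{\pr_0}(\sqrt{\tilde R^{0n}})$, negligible relative to $\tilde R^{0n}$; the condition $A\tilde R^{0n}/n<1/2$ ensures $\theta_\infty<1$ and keeps the signal coordinates a controlled fraction of $n$. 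Collecting the two blocks shows that, on an event $\Omega_n$ with $\pr_0(\Omega_n^c)\to 0$, one has $D_n\ge d_n$ with $-\log d_n\le C'K\tilde R^{0n}+o(\tilde R^{0n})$.

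The main obstacle is precisely this bespoke evidence lower bound for the null coordinates: a naive Kullback--Leibler neighborhood over all $n$ entries is too lossy, because fitting $n-\tilde R^{0n}$ pure-noise entries with the spike contributes $\Theta(n\theta_\infty^K)$, which can exceed $\tilde R^{0n}$, and the saving comes only from averaging $\Lambda_i$ over the spike and exploiting the cancellation of the mean-zero fluctuations $y_i^2-1$. The remaining delicate point is the constant matching, namely verifying that the signal-block cost $C'K\tilde R^{0n}$ stays strictly below the exponent $CK\tilde R^{0n}$ supplied by Lemma \ref{lm:prior}; this is where the interplay between the $\alpha^{(k)}$ (which fix both $C$ and the slab probabilities) and the threshold condition $\epsilon_n^{1/\tilde R^{0n}}>\alpha^{(k)}/(1+\alpha^{(k)})$ must be used. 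Once $C'<C$ is secured one has $-\log d_n<CK\tilde R^{0n}$ with diverging gap, so that $d_n^{-1}\exp(-CK\tilde R^{0n})\to 0$, and together with $\pr_0(\Omega_n^c)\to 0$ this yields $E_0\{\pr(M_n>\epsilon_n\mid y_1,\dots,y_n)\}\to 0$.
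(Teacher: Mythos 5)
Your skeleton coincides with the paper's: the same evidence-ratio representation $\pr(M_n>\epsilon_n\mid y)=N_n/D_n$, the same Fubini argument reducing $E_0(N_n)$ to the prior probability bounded by Lemma~\ref{lm:prior}, and the same splitting over a high-probability event for the denominator. The genuine gap is in your signal block. Under the multiway CUSP prior, the prior probability that the first $\Tilde{R}^{0n}$ coordinates of mode $k$ all lie in the slab is
$E\bigl\{\prod_{i\le \Tilde{R}^{0n}}(1-\pi_i^{(k)})\bigr\}=\prod_{m=1}^{\Tilde{R}^{0n}}\alpha^{(k)}/(\alpha^{(k)}+m)=\exp\{-\Theta(\Tilde{R}^{0n}\log\Tilde{R}^{0n})\}$,
because $1-\pi_i^{(k)}=\prod_{l\le i}(1-V_l^{(k)})$ and $E\{(1-V_l^{(k)})^m\}=\alpha^{(k)}/(\alpha^{(k)}+m)$. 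Hence restricting the factors $\uu[k]{i}$, $i\le\Tilde{R}^{0n}$, to slab balls around the truth costs $\exp\{-\Theta(K\Tilde{R}^{0n}\log\Tilde{R}^{0n})\}$ in prior mass, so your $-\log d_n$ is of order $K\Tilde{R}^{0n}\log\Tilde{R}^{0n}$, not the claimed $C'K\Tilde{R}^{0n}+o(\Tilde{R}^{0n})$. Since the constant $C$ supplied by Lemma~\ref{lm:prior} is fixed while the theorem explicitly allows $\Tilde{R}^{0n}\to\infty$, the ``constant matching $C'<C$'' that you correctly identify as the delicate step cannot be secured: your construction fails precisely in the growing-rank regime the result is designed for.

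Relatedly, your reason for rejecting the pointwise small-ball route on the null coordinates rests on a miscalculation---and that route is exactly the paper's proof. The $\Theta(n\theta_\infty^K)$ figure is the Kullback--Leibler cost of a \emph{typical} spike draw, not the prior mass of the event $\{|\eta_i|\le r_n/\sqrt{n},\ i>\Tilde{R}^{0n}\}$. Lemma~2 in the Supplementary Materials shows this event has per-coordinate prior probability at least $\prod_{k=1}^K\{1-(\alpha^{(k)}/(1+\alpha^{(k)}))^{h}\}$, essentially the spike probability; the hypothesis $\theta_\infty(n)<(\Tilde{R}^{0n}/n)^{1/(2K)}$ is used exactly to make the spike concentrate inside the per-coordinate ball of radius $(r_n/\sqrt{n})^{1/K}$ with $r_n^2=\Tilde{R}^{0n}$, not to cancel mean-zero fluctuations $y_i^2-1$. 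Multiplying over the null indices and using $\{\alpha^{(k)}/(1+\alpha^{(k)})\}^{\Tilde{R}^{0n}}<A\Tilde{R}^{0n}/n$ together with $(1-x)^{1/x}>1/(2e)$ gives $\pr(\|\eta_H\|\le r_n)\ge(2e)^{-KA\Tilde{R}^{0n}}$, linear in $\Tilde{R}^{0n}$; Lemma~5.2 of \citet{castillo2012} then yields $D_n\ge e^{-r_n^2}\pr(\|\eta_H\|\le r_n)$ outside an event of $\pr_0$-probability at most $e^{-r_n^2}$. Crucially, the paper never demands prior mass near the signal coordinates at all---its event $A_n$ involves only the null block $H=\{\Tilde{R}^{0n}+1,\dots,n\}$---and this is what keeps the evidence exponent linear in $\Tilde{R}^{0n}$ and makes the final gap $C>1+KA\log(2e)$ attainable. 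Your second-order expansion $\tfrac12\theta_\infty^K\sum_{i>\Tilde{R}^{0n}}(y_i^2-1)$ for the null block is an interesting alternative but is unnecessary machinery; to repair the argument, drop the signal-block slab-ball construction and bound the evidence through the null-block small ball as the paper does.
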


\section{Posterior computation}
\label{sec:computation}

Posterior inference for AdaTuck  proceeds via Gibbs sampling. 
Specifically, the sampler exploits a convenient data augmentation of the multiway CUSP prior. Let $\sk[k]{r_k} \in \{1, \dots, \infty\}$ denote independent categorical variables with $P(\sk[k]{r_k}  = l \mid \omega_{l}^{(k)}) = \omega_{l}^{(k)}$ ($l = 1, \dots, R_k$). The specification
\begin{equation*} 
     \theta_{r_k}^{(k)} \mid \sk[k]{r_k} \sim \{1 - \mathds{1}(\sk[k]{r_k} \le r_k) \} \text{InvGa}(a_{\theta}, b_{\theta}) + \mathds{1}(\sk[k]{r_k} \le r_k ) \delta_{\theta_{\infty}}
\end{equation*}
induces (\ref{eq:nu}) via marginalization over $\sk[k]{r_k}$. The number of active (slab) terms are therefore $R^{*}_k = \sum_{r_k=1}^{\infty} \mathds{1}(\sk[k]{r_k} > r_k )$. In other words, $R^{*}_k$ are the effective number of nonparametric terms composing the multi-rank. We start by introducing the Gibbs sampler for a fixed truncation level $(R_1, \dots, R_K)$, as formally justified in Theorem \ref{thm:truncation}, and then propose an adaptive strategy for inference on $R_k^*$ with $k=1,\dots, K$. 

Conditioned on $\sk[k]{r_k}$ ($k=1, \dots, K$ and $r_k = 1, \dots, R_k$) it is possible to sample from conjugate full-conditionals.
The joint distribution of $\theta_{r_k}^{(k)}$ and $\uu[k]{r_k}$ is a $r_k$ dimensional multivariate normal centered in zero with covariance matrix $\theta_{r_k}^{(k)}I_{n_k}$.
Marginalizing out $\theta_{r_k}^{(k)}$ from the latter distribution, one can obtain the 
full-conditional distributions for the updating of the augmented data. They are
\begin{equation} \label{eq:Sh_multinomial}
    P(\sk[k]{r_k} = l) \propto 
    \begin{cases}
    \omega_{l}^{(k)} N_{n_k}(\uu[k]{r_k}; 0, \theta_{\infty}I_{n_k}), &\text{for $l=1, \dots, r_k$},\\
    \omega_{l}^{(k)} t_{2a}(\uu[k]{r_k}; 0, (b_{\theta}/a_{\theta}) I_{n_k}), &\text{for $l=r_k+1, \dots, R_k$},\\
    \end{cases}
\end{equation}
where  $N_p(x;\mu, \Sigma)$ and $t_v(x;\mu,\Sigma)$ are respectively the densities of $p$-variate Gaussian and Student-$t$ with $v$ degrees of freedom distributions evaluated at $x$.
For the updating of the factor matrices we exploit the matricized form of the Tucker decomposition. Let $Z_{(k)}$ denote the mode-$k$ matricization of a tensor $\mathcal{Z}$ that arranges the mode-$k$ fibers to be the columns of the resulting matrix. The Tucker decomposition of $\Z$ can be equivalently written as
\begin{equation*}
    Z_{(k)} = \UU{k} G_{(k)} W^{(-k)}, \quad \text{with } W^{(-k)} = (\UU{K} \otimes \cdots \otimes \UU{k+1} \otimes \UU{k-1} \otimes \cdots \otimes \UU{1})^{\top},
\end{equation*}
where $G_{(k)}$ is the mode-$k$ matricization of the core tensor $\G$.
Algorithm \ref{algo:gibbs} summarizes all the steps of the Gibbs sampler with a finite truncation level for the multi-rank.

\begin{algorithm}[t]
    \caption{One cycle of the Gibbs sampler for the AdaTuck model with finite truncation level $(R_1, \dots, R_K)$.} \label{algo:gibbs}
    \begin{enumerate}

    \item  For $k$ from 1 to $K$ and for $i_k$ from 1 to $n_k$:\\
     $\quad$ sample the $i$-th row of $\UU{k}$ from   $N_{R_k}(V_u G_{(k)} W^{(-k)} \sigma^{-2} y_{i_k}, \ V_u)$, with $y_{i_k}$ the $i_k$-th\\ $\quad$ row of $Y_{(k)}$ and
    $V_u = \{\tdiag(\theta_{1}^{(k)}, \dots, \theta_{R_k}^{(k)})^{-1} + G_{(k)} W^{(-k)} \sigma^{-2}  W^{(-k) \top} G_{(k)}^{\top}\}^{-1}$. 

    \item Sample $\tvec(\G)$ from $N_R(V_g W^{\top} \sigma^{-2} \tvec(\Y), V_g )$, with $V_g = (\Psi^{-1} +\sigma^{-2} W^{\top}  W)^{-1}$, 
     $\Psi = \text{diag}(\psi_{1\cdots1}, \dots, \psi_{R_1 \cdots R_K})$ and $\psi_{r_1 \cdots r_K} = \tau\nu_{r_1 \cdots r_K}$.
    
    \item  Sample $\sigma^2$ from 
    InvGa$(a_{\sigma} + 0.5 n, \ b_{\sigma} +0.5 \sum_{k=1}^{K} \sum_{i_k=1}^{n_k}(y_{\mathbf{i}} - z_{\mathbf{i}} )^2$,  with $z_{\mathbf{i}} = \sum_{k=1}^{K} \sum_{r_k=1}^{R_k}  g_{\mathbf{r}} \uu[1]{r_1} \circ \cdots  \circ \uu[K]{r_K}$.

    \item Sample $\tau \sim \mathrm{GIG}(a_{\tau} - 0.5R, \sum_{k=1}^K\sum_{r_k=1}^{R_k} g_{\mathbf{r}}^2/ \nu_{\mathbf{r}}, 2b_{\tau})$, where GIG is shorthand for the Generalized Inverse Gaussian distribution.

    \item For $\mathbf{r} \in J$, with $J$ the index set of all observations of $\G$:\\
    $\quad$ sample $\nu_{\mathbf{r}} \sim \mathrm{GIG}(0.5, g_{\mathbf{r}}^2/\tau, \rho^2_{\mathbf{r}})$.

    \item For $\mathbf{r} \in J$:\\
    $\quad$ sample $\rho_{\mathbf{r}} \sim Ga(a_{\rho}+1, b_{\rho} + |g_{\mathbf{r}}|/\tau^{1/2})$.
    
    \item For $\mathbf{r} \in J$:\\ 
        $\quad$ sample $\sk[k]{r_k}$ from the categorical distribution with probabilities as in (\ref{eq:Sh_multinomial}).

    \item For $k$ from 1 to $K$ and for $l$ from 1 to $R_K$:\\ 
    $\quad$ update $V_{l}^{(k)}$ from
    Be$(1 + \sum_{r_k=1}^{R_k} 
    \mathds{1}(\sk[k]{r_k} = l), \alpha^{(k)} + \sum_{r_k=1}^{R_k}\mathds{1}(\sk[k]{r_k} > l))$.\\
    Set $V_{k,R_k}=1$, and update $\omega_{k,1}, \dots, \omega_{k,R_k}$ through (\ref{eq:nu}).

  \item For $k$ from 1 to $K$ and for $r_k$ from 1 to $R_k$:\\
    $\quad$ if $\sk[k]{r_k} \le r_k$ set $\theta_{r_k}^{(k)} = \theta_{\infty}$; otherwise sample $\theta_{r_k}^{(k)}$ from\\ $\quad$ InvGa$(a_{\theta}+0.5n_k, \ b_{\theta} + \sum_{i=1}^{n_k} u^{(k) \ 2}_{ir_k})$.    
    \end{enumerate}

\end{algorithm}

It is reasonable to run the Gibbs sampler in Algorithm \ref{algo:gibbs} using a sufficiently large truncation level $(R_1,\dots,R_K)$, with $R_k \le n_k +1$ since there are at most $R_k  -1$ active components for each mode. However, in practice we expect relatively few important factors compared with the dimension of the original tensor.
In this regard, it is crucial to design an appropriate computational strategy that balances the risk of missing important factors by choosing a truncation level that is too small, and the inefficiency that comes with setting the truncation level too conservatively.

We tackle this issue by implementing an adaptive  Gibbs sampler that tunes the latent multi-rank as the sampler proceeds, building on similar strategies employed in \cite{CUSP} and  \cite{Bhattacharyadunson11}. 
Specifically, we adapt $(R_1, \dots, R_K)$ at  iteration $t$ with probability $\exp(c_0 + c_1t)$ with $c_0 \le 0$ and $c_1 < 0$, to satisfy the diminishing adaptation condition in \cite{roberts2007coupling}. The adaptation consists in dropping the inactive columns of $U^{(k)}$ for $k=1,\dots,K$, if any, together with the corresponding parameters and the corresponding rows of the mode-$k$ matricization of the core tensor $G_{(k)}$.
Consequently, we decrease the truncation level by setting $R_k = R_k^*+1$, with  $R^{*}_k = \sum_{r_k=1}^{R_k} \mathds{1}\{\sk[k]{r_k} > r_k \}$.
If all columns of $U^{(k)}$ are active an extra factor for dimension $k$ is added, sampling the associated parameters from the prior. The inactive columns of $U^{(k)}$ are naturally identified under \eqref{eq:nu} since they are those modelled by the spike and hence have indexes $r_k$ such that $S_{r_k}^{(k)} \le r_k$.
This strategy allows to independently adapt the elements for each mode, such that at iteration $t$  $R_k$ may decrease while $R_{k'}$ may not. Algorithm \ref{algo:adaptivegibbs} shows the detailed steps of the adaptive Gibbs sampler.

\begin{algorithm}[t]
    \caption{One cycle of the adaptive Gibbs sampler for the AdaTuck model.} \label{algo:adaptivegibbs}

Let $R^{(t)}_k$ be the truncation at iteration $t$ for mode $k$ and  $R^{*(t)}_k = \sum_{r_k=1}^{R_k^{(t)}} \mathds{1}\{
S^{(k)(t)}_{r_k} > r_k \}$.
\begin{enumerate}
    \item Perform one cycle of Algorithm \ref{algo:gibbs}.
    \item if $t \ge \Tilde{t}$ adapt with probability $p(t) = \exp(c_0 + c_1t)$ as follows\\
    for $k$ from $1$ to $K$:\\
    \begin{itemize}
        \item \textbf{if} $R^{*(t)}_k < R^{(t-1)}$\\
        set $R^{(t)}_k = R^{*(t)}_k+1$, drop the inactive columns in $U^{(k)}$ together with the associated parameters in  $\theta^{(k)}$, $\omega^{(k)}$ and the mode-$k$ matricization of $\nu$, $\rho$ and $\G$ and add a final components to  $U^{(k)}$, $\theta^{(k)}$, $\omega^{(k)}$ and the mode-$k$ matricization of $\nu$, $\rho$ and $\G$ sampled from the corresponding priors;
        \item \textbf{else}\\
        set $R^{(t)}_k = R^{(t-1)}_k +1$ and add a final column sampled from the spike to $U^{(k)}$, together with the associated parameters in  $\theta^{(k)}$, $\omega^{(k)}$ and the mode-$k$ matricization of $\nu$, $\rho$ and $\G$, sampled from the corresponding priors. 
    \end{itemize}

    \end{enumerate}
\end{algorithm}

The proposed approach provides a principled manner for dealing with multiway data with missing records. In the Gibbs sampler, indeed, it is straightforward to sample conditioning only on the observed entries of the data. Then, for predicting purposes, we can impute the missing entries of $\Y$ leveraging the posterior samples of the latent quantities of the factorization. 
Posterior inference in the probit-AdaTuck proceeds via an analog adaptive Gibbs sampler that leverages a probit data augmentation scheme \citep{albert1993bayesian}. 
See Section 2 of the Supplementary Materials for practical implementation specifics and the detailed steps of the adaptive Gibbs sampler for binary tensor data.

\section{Empirical illustrations} \label{sec:illustration}

\subsection{Synthetic data}
\label{sec:simulations}

In this section, we evaluate the performance of AdaTuck with synthetic data. Specifically, we assess its performance in terms of learning the true latent multi-rank of the Tucker factorization, under correct specification. Additionally, we assess its goodness-of-fit by randomly removing some entries of the tensor data before fitting and then imputing these missing values by exploiting the posterior predictive distribution. See Section 3 of the Supplementary Materials for additional experiments in data completion with real image data.

In line with the real data applications presented in the following sections, we focus on three-way tensor, i.e. $K = 3$, and we simulate $\Y$ assuming the Tucker factorization structure of equation (\ref{eq:tuck_intro}).
The rows of the factor matrices $\UU{K}$ are drawn from independent normal centered in zero with a diagonal covariance matrix whose elements are drawn from independent inverse gamma distributions with both parameters equal to two, while the entries of the residual noise tensor $\Z$ are drawn from independent $N(0,0.1)$. We simulate the entries of $\G$ from independent $N(0,1)$ and introduce local sparsity by setting equal to zero $40\%$ of the total core tensor entries.
We consider two different settings for the tensor dimension, one where the latent ranks are different and so the Tucker factorization is effective, i.e. $(n_1, n_2, n_3) =(50, 40, 6)$ and $(R_1^{0}, R_2^{0}, R_3^{0}) =(10, 7, 3)$, and one where the data are close to a PARAFAC decomposition model, i.e. $(n_1, n_2, n_3) =(30, 30, 10)$ and $(R_1^0, R_2^0, R_3^0) =(5, 5, 5)$. To assess the performance of our proposed approach under different amounts of data availability, we consider two different settings for missing data. For both scenarios we randomly impute to missing values the $30\%$ and the $50\%$ of the total entries of the data $\Y$. In each setting we simulate $B=20$ different synthetic datasets.

We perform posterior inference on $\Y$ under our proposed approach exploiting the adaptive Gibbs sampler in Algorithm \ref{algo:adaptivegibbs}. We set $(a_{\theta}, b_{\theta}) = (a_{\tau}, b_{\tau}) =(2, 2)$, $\alpha^{(k)} = 3$ for each $k=1,\dots,K$,  $(a_{\sigma}, b_{\sigma}) = (1, 0.3)$,  $(a_{\rho}, b_{\rho})=(10,10)$ and $\theta_{\infty} = 0.05$ following \cite{CUSP}. The Gibbs sampler is run for $12{,}000$ iterations of which the first $8{,}000$ are discarded as burn-in.

To assess the performance of our proposed approach in learning the true underlying multi-rank $(R_1^0, R_2^0, R_3^0)$, we compute the
Monte Carlo average over the $B$ replicates of the posterior median of $R_1^*$,  $R_2^*$ and  $R_3^*$ under the AdaTuck model. 
The results for the $B$ different datasets across the two different scenarios are shown in Figure \ref{fig:rank}. In both scenarios, our proposed method correctly recovers the true multi-rank for all tensor margins, thus efficiently learning the truncation level in the adaptive Gibbs sampler.  AdaTuck posterior estimations of the multi-rank highly concentrate around the true values $(R_1^0, R_2^0,R_3^0)$, and the results are robust with respect to the number of missing values in the data.
Figure S2 in the Supplementary Materials illustrates the results for the probit-AdaTuck model in the same scenarios, showing its effectiveness in recovering the true multi-rank  also in the binary data setting.
For continuous tensor data, we benchmark multi-rank estimations using the Eigenvalue Ratio (ER) and the Growth Ratio (GR) criteria  \citep{ahn2013eigenvalue}, applied to each mode-$k$ matricization 
$Y(k)$. These methods perform well on fully observed tensors but fail for partially observed data, as they require complete matrices for eigendecomposition. When missing values are naively imputed using the mean, ER and GR severely underestimate the true multi-rank, as shown in Table S2 of the Supplementary Materials.

\begin{figure}[t]
\begin{center}
    \includegraphics[width=0.9\textwidth]{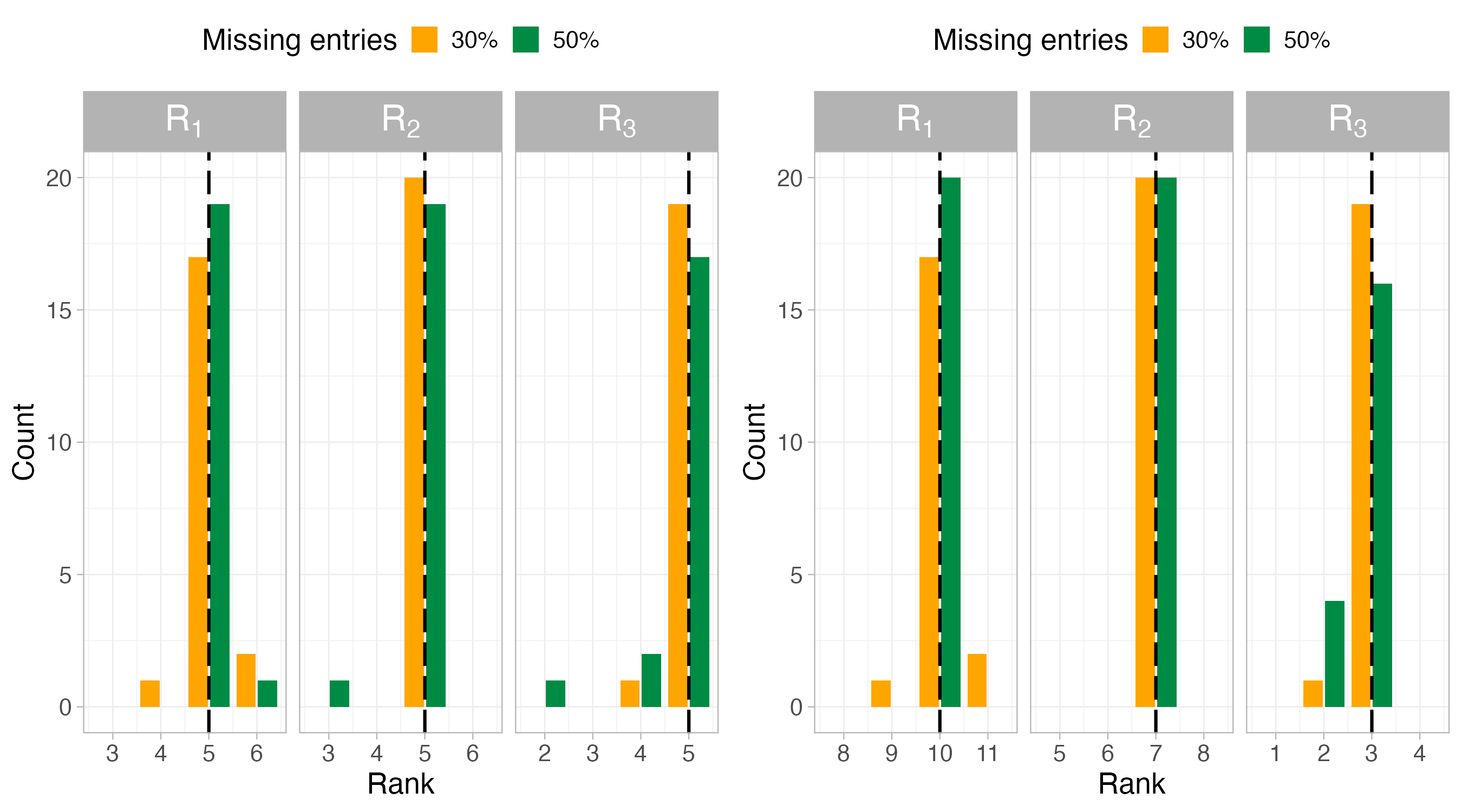} 
    \caption{Distribution of the Monte Carlo estimations for the median of $R_1^*$, $R_2^*$ and $R_3^*$ for the 20 different simulations in the two scenarios for $30\%$ and $50\%$ of total entries missing. Black dotted lines are the true $R_1^0$, $R_2^0$ and $R_3^0$ values.}
    \label{fig:rank}
\end{center}
\end{figure}

Subsequently, we evaluate the performance of the proposed  method in predicting the missing values and reconstructing the original tensor for the two scenarios. We benchmark the results with the  Bayesian tensor factorization method proposed in \cite{khan2016tensorbf} and implemented in the R package \texttt{tensorBF}, which targets multiway data with incomplete records. This method implements a Bayesian PARAFAC decomposition and achieves rank optimization through a sparsity parameter to remove redundant factors. We run \texttt{tensorBF} with the default options for the same number of iterations we used for AdaTuck. 
We compare the two methods by computing the  Monte Carlo mean square error (MSE) in predicting the missing values with the reconstructed tensor based on the latent structure. 
Table \ref{tab:simMSE} summarizes, for each scenario and model, the median and the interquartile range of the MSE obtained from the $B$ measures generated from different simulations, together with the running times. Such quantities rely on an \texttt{R} implementation run on a Apple M1 CPU laptop with 8
GB of RAM.
Across all scenarios AdaTuck outperforms the competitor in predicting the missing entries of the tensor. As expected, \texttt{tensorBF}, which implements a PARAFAC decomposition, performs better in the scenario similar to the PARAFAC structure. Instead,  AdaTuck with its greater flexibility provides good results in both scenarios. Additionally, AdaTuck is robust to variations in the amount of available data, maintaining stable performance even as the percentage of missing values increases, whereas the competitor's performance gets worse as missing entries increase.

\begin{table}[t]
    \centering
    \caption{Performance of AdaTuck and TensorBf in predicting the missing entries for the 20 simulated datasets for the two scenarios. The shorthand `NA(\%)' represents the percentage of missing entries with respect to the total tensor entries. The runtime is in seconds.}
    \vspace{0.3cm}
    \begin{tabular}{cc|ccc|ccc}
    
   \multicolumn{2}{}{} &
      \multicolumn{3}{c}{AdaTuck} &
      \multicolumn{3}{c}{ TensorBF}  \\
      
    $(R_1^0, R_2^0, R_3^0)$ & NA (\%) & MSE  &  MSE  & runtime & MSE  &  MSE   &runtime \\
    && (median) & (IQR) & (median) & (median) & (IQR) & (median)\\
    \hline
   (5,5,5) & 30  & 0.22 & 0.01& 1053 &33.00 & 47.27 & 240\\
    & 50 & 0.23  & 0.02 & 962 & 46.73& 374.78 & 241\\
    (10,7,3) & 30  & 0.23 & 0.01 & 2234 & 138.27 & 121.13&  326\\
    & 50  & 0.24  & 0.01 & 1781 & 146.28& 153.58 & 324\\
    \hline 
  \end{tabular}
    \label{tab:simMSE}
\end{table}

\subsection{Licorice chemometrics data}
Chemical data are often multiway by nature, making tensor factorization techniques popular in chemometrics \citep{tomasi2005parafac}.
In this section, we analyze a sensor-based dataset from the licorice industry \citep{skov2005new}, driven by the food industry’s increasing focus on process control and quality checking.
Data from sensor-based analytical instruments (e.g. electronic nose) can be arranged in a three-way array as sample $\times$ time $\times$ sensor.
The electronic nose is used to distinguish complex volatiles, which can reproduce the structure and principle of olfactory sense.
The dataset consists of $6$ good licorice samples and $12$ bad licorice samples, half of which are artificially fabricated by drying for a long time good samples.
Twelve sensors, all based on Metal Oxide Semiconductor technologies,  were used to capture the volatile compound measurements every 0.5 seconds over a span of 120 seconds, resulting in a data tensor $\Y$ of dimensions $18 \times 241 \times 12$.
The sensors are distributed in two chambers with different temperatures.
The data are normalized over all cells to have mean zero and unit variance, allowing one to assume default values for hyperparameters in the proposed multiway priors.

We apply AdaTuck to the Licorice data, using the same hyperparameters and number of iterations for the Gibbs sampler described in Section \ref{sec:simulations}. 
The Markov chain is then thinned by retaining every $4$-th sample, resulting in a total of $2{,}000$ samples. 
The Monte Carlo mean of the posterior median of the latent multi-rank is $(2, 34, 7)$. Notably, $R_1^*$, the latent rank corresponding to the samples mode, was estimated as two, capturing the distinction between good and bad quality licorice. 

To investigate whether bad and good licorice samples exhibit distinct behaviors, we examine posterior samples from AdaTuck.
Figure \ref{fig:postChem} shows signal posterior means and credible intervals for four different licorice samples (two bad and two good) for three different sensors.
The plot shows the typical sensor response curves which first increase (adsorption phase) reaching the maximum peak and then decrease returning to the initial stage (desorption phase).
Across all sensors, bad and good samples follow similar temporal trends but consistently exhibit different signal levels, clearly delineating the two quality groups. This difference is particularly evident at the maximum peak of the curves, a critical part for sample separation as also discussed by \cite{skov2005new}.

\begin{figure}[t]
\begin{center}
    \includegraphics[width=0.99\textwidth]{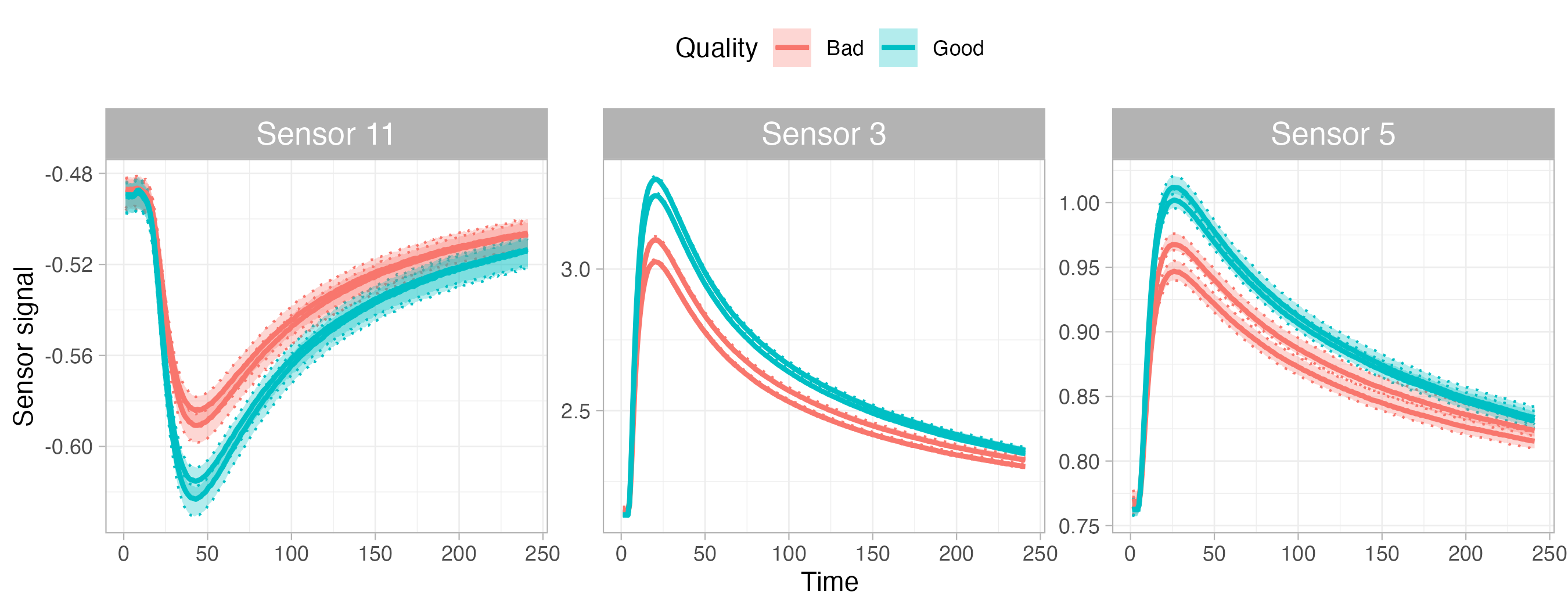} 
    \caption{Signal posterior means and $90\%$ credible intervals for four different licorice samples (two good and two bad) for three different sensors.}
    \label{fig:postChem}
\end{center}
\end{figure}

For the three sensors in Figure \ref{fig:postChem} the time profiles reveal differences in magnitude, 
while their overall shapes remain similar. 
This behavior is consistent across all 18 sensors and can be further analyzed by examining the posterior samples
of $U^{(3)}$. Figure S3 in the Supplementary Materials  shows a representative sample of $U^{(3)}$, the factor matrix associated to the sensor margin, randomly selected from the Markov chain Monte Carlo iterations, along with the posterior means of the sensor response curves for the twelve sensors for one licorice sample. Different samples from the Markov chain lead to qualitatively similar results. 
The posterior means of sensor response curves show two distinct groups: one with negative magnitudes with respect to the global mean and one with positive magnitudes. Within the positive group, sensor 3 stands out with a higher magnitude compared to the other sensors. This pattern is also evident in the posterior sample of $U^{(3)}$ where two groups of positive and negative loadings emerge,  with sensor 3 exhibiting larger values for corresponding entries. These two groups may reflect the placement of the sensors in two separate chambers.

\subsection{Fish biodiversity data} \label{sec:fish}

We analyze data from a community ecology monitoring program of the North Sea demersal fish community. The  International Council for the Exploration of the Sea (ICES) created a unique data set covering more than 30 years of multiple fish species abundance across different locations in the North Sea.  Data have been collected using standard otter trawls as sampling gear.
The North Sea marine ecosystem has already been significantly impacted by climate change and fisheries exploitation, motivating a comprehensive analysis of fish community dynamics to support efforts in preserving this fragile ecosystem.

Fish co-occurrence data with repeated sampling at multiple locations can be naturally organized as a binary three-way tensor in the form of species $\times$ time $\times$ space. The data contains fish co-occurrence for a total of $65$ individual species for the period 1985 to 2015. The data used refer to the first quarter of each year, to avoid seasonality bias.
The North Sea is divided into seven areas, based on ecological aspects of the fish fauna, such as feeding and species composition. Hence, the binary tensor $\Y$ has dimension $65 \times 31 \times 7$. 
These data have been previously analyzed by \cite{Frelat3D}, one of the few contributions in community ecology to preserve the data's intrinsic multi-view structure without reducing it to a matrix and sacrificing one of the dimensions. However, \cite{Frelat3D} do not assume any probabilistic framework or make any distributional assumption, but analyze the data through principal tensor analysis over $k-$modes.

We analyze North Sea fish co-occurrence using the probit-AdaTuck model presented in Section \ref{sec:binProbit}. 
Posterior inference is performed using the same hyperparameters as in Section \ref{sec:simulations}. The adaptive Gibbs sampler is run for $20{,}000$ iterations, with the first $10{,}000$ discarded as burn-in. The Markov chain is then thinned by retaining every $5$-th sample, resulting in a total of $2{,}000$ samples. 
The Monte Carlo mean of the posterior median of the latent multi-rank under the AdaTuck model is $(R_1^*, R_2^*, R_3^*) = (9,3,4)$. 
This result highlights the effectiveness of using the Tucker model for achieving a more efficient data representation compared to the PARAFAC model. 
Given the different margin dimensions of the data, it was reasonable to expect the rank for the first dimension to be higher than the rank for the third dimension.

To investigate the spatial and temporal dynamics of fish communities we look at  the posterior marginal probabilities $\pi_{\mathbf{i}}$ in \eqref{eq:pimarginal} composing the corresponding probability tensor $\mathcal{M}$. 
Analyzing these probabilities across different tensor slices can provide valuable insights into population behavior, potentially revealing interesting spatial or temporal patterns. Figure S4 in the Supplementary Materials shows the posterior mean of the frontal slice of $\mathcal{M}$, $M_{::i}$ for $i=1,\dots,n_3$, for two roundfish areas of the North Sea: the northernmost area and one located in the southeast. The latter highlights the differences between fish species in the southern part of the North Sea and those in the northern region, which is deeper and experiences less seasonal temperature variation. As a result, we observe a strong  north-south component in the composition of the North Sea fish community, consistent with previous findings \citep{Frelat3D}.
Looking at the biogeographic information of species, we observe that Atlantic fishes, species widespread in the North Atlantic, have a small probability of occurring in the southern area close to the Netherlands ' coasts, while they have a higher probability of occurring in the northeast area, as it borders the Atlantic Ocean.
Conversely, Lusitanian fishes have a higher probability of occurring in the southern area. Lusitanian fishes, indeed,  tend to be abundant from the Iberian peninsula to as far north as the British Isles and the central North Sea \citep{engelhard2011ecotypes}.

While some species, such as \textit{Arnoglossus laterna} and \textit{Chelidonichthys cuculus}, show a positive temporal trend in their occurrence probabilities, our main interest  is identifying negative trends that may indicate endangered or vulnerable species. Figure \ref{fig:trend} shows posterior means and posterior $90\%$ credible intervals of two horizontal slices of $\mathcal{M}$ relatives to the species \textit{Squalus acanthias} and \textit{Anarhichas lupus}. Both species exhibit a negative trend in occurrence probabilities over the years. This trend was particularly evident for the southern-eastern North Sea sites (labeled as `S', `SE', `E') for the \textit{Squalus acanthias} and for the northern-western sites (labeled as `C', `NW', `SW') for the \textit{Anarhichas lupus}. The \textit{Squalus acanthias}, commonly known as Spurdogs, are classified in the IUCN Red List of threatened species as vulnerable globally and critically endangered in the Northeast Atlantic. The observed positive growth after 2005 may indicate that conservation policies have been effective in protecting the Spurdog population.
Furthermore, \cite{bluemel2022decline} observe a decline in \textit{Anarhichas lupus} in the North Sea, commonly named wolffish, mainly due to fishing pressure and climate change. 

\begin{figure}[t]
\begin{center}
    \includegraphics[width=0.99\textwidth]{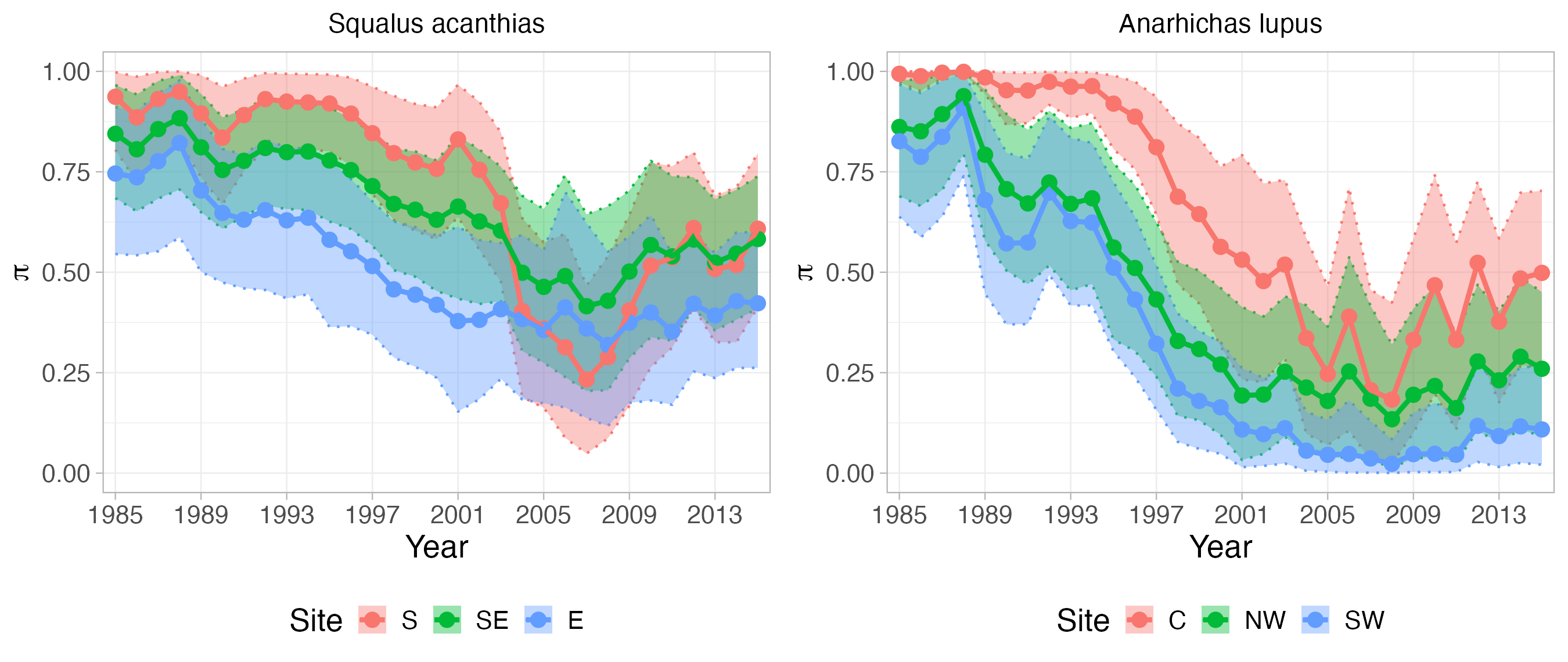} 
    \caption{Posterior mean and $90\%$ credible intervals of two horizontal slices of $\mathcal{M}$: one relative to \textit{Squalus acanthias} for the southern-eastern area of the North Sea (S, SE and E)  and one relative to \textit{Anarhichas lupus} for northern-western areas (C, NW and SW).}
    \label{fig:trend}
\end{center}
\end{figure}

To assess fish interactions, we compute an estimation of the posterior correlation of the latent variable in the probit model. Species that show positive correlations tend to occur together. Recalling from Section \ref{sec:binProbit} that the covariance matrix among the vector of the latent variable $\tvec(\Z)$ is $\Sigma = W \Psi W$, the correlation matrix among $\tvec(\Z)$ is $D^{-1} \Sigma D^{-1}$, where $D$ is the diagonal matrix containing the square root of the diagonal elements of $\Sigma$. Figure S5 in the Supplementary Materials shows the posterior mean of the correlation matrix for the  observations of the first year in the northernmost area. It highlights different positive and negative correlation patterns in fish co-occurrence, potentially identifying connected sub-communities of species.

\section{Discussion} \label{sec:end}
We developed a flexible Bayesian nonparametric framework for analyzing tensor data under a Tucker decomposition model. A key element of our approach is the ability of the model of adapting its complexity: the model  adaptively learns the latent multi-rank based on data characteristics using a multiway shrinkage process. This prior enables principled rank selection through an elegant and coherent probabilistic setting.
Our framework naturally accommodates both continuous and binary data with missing values, using suitable likelihood models. 
We provided theoretical justifications for the proposed AdaTuck model,  with applications demonstrating strong empirical performances.
Code to implement our method and reproduce our analysis is available at \url{https://github.com/federicastolf/AdaTuck}.

A promising direction for the method introduced  would be to extend the proposed unsupervised approaches to the case of linear regression models, where tensor predictors are related to scalar outcomes. Most existing works on this topic are limited to PARAFAC decomposition and do not provide a theoretically justified approach for the rank choice \citep{Guhaniyogietal2017}. Although inference for the proposed method is carried through efficient closed-form Gibbs sampling updates, computational demands can rise with high-dimensional data.
Interesting future directions include developing scalable inference strategies, leveraging approximate Bayesian methods such as expectation propagation or variational Bayes, to further enhance AdaTuck’s applicability to large-scale, multiway data settings.

\subsection*{Acknowledgements}
The authors thank David Dunson, Alberto Cassese and Giovanni Toto for comments on early versions of the draft. 

\bibliography{reference}

\clearpage

\section*{Supplementary Materials}

\setcounter{section}{0}
\setcounter{table}{0}
\setcounter{figure}{0}
\setcounter{equation}{0}
\renewcommand{\thefigure}{S\arabic{figure}}
\renewcommand{\thesection}{S.\arabic{section}}
\renewcommand{\thetable}{S\arabic{table}}
\renewcommand{\theequation}{S.\arabic{equation}}

\section{Proofs}

\begin{proof}[Proof of Theorem 1]
 Let $M_{\textbf{i}}^{R_1,\dots,R_K} = \sum_{\mathbf{r}, r_k = R_k+1}^{\infty} g_{\mathbf{r}} \prod_{k=1}^K \uu[k]{i_k r_k} $ be the residuals for a generic index vector $\mathbf{i}$. By Markov's inequality 
 \begin{equation*}
      \pr\{(M_{\textbf{i}}^{R_1,\dots,R_K})^2 \ge \epsilon\} \le \frac{E\{(M_{\textbf{i}}^{R_1,\dots,R_K})^2\}}{\epsilon},
 \end{equation*}
 and since $g_{\mathbf{r}}$ and $\{\uu[k]{i_k r_k}\}_{k=1}^K$ are independent for all indexes $\mathbf{r}$, the second moment  of $M_{\textbf{i}}^{R_1,\dots,R_K}$ is equal to
\begin{equation*}
     E\{(M_{\textbf{i}}^{R_1,\dots,R_K})^2\} = \sum_{\mathbf{r}, r_k = R_k+1}^{\infty} E(g_{\mathbf{r}}^2) \prod_{k=1}^K E\{(\uu[k]{i_k r_k})^2\},
\end{equation*}
where $E\{(\uu[k]{i_k r_k})^2\} = E[E\{(\uu[k]{i_k r_k})^2 \mid \theta^{(k)}\}] $$= {E(\theta^{(k)}_{r_k})} = \{\alpha^{(k)}(1+\alpha^{(k)})^{-1}\}^{r_k}$ as shown in \cite{CUSP}. Under the assumption that the 
elements of the core tensor $g_{\mathbf{r}}$ have bounded variances, 
$E(g_{\mathbf{r}}^2)$ will be bounded by $\zeta < \infty$. Therefore, the following bound holds
\begin{equation*}
    E\{(M_{\textbf{i}}^{R_1,\dots,R_K})^2\} < \zeta \sum_{\mathbf{r}, r_k = R_k+1}^{\infty} \prod_{k=1}^K \{\alpha^{(k)}(1+\alpha^{(k)})^{-1}\}^{r_k}.
\end{equation*}
The above equation can be factorized with respect to the indexes of the sum, and  by noticing that $\sum_{r_k = R_k+1}^{\infty} \{\alpha^{(k)}(1+\alpha^{(k)})^{-1}\}^{r_k} = \alpha^{(k)} \{\alpha^{(k)}(1+\alpha^{(k)})^{-1}\}^{R_k}$ for any $k=1,\dots,K$ we obtain   
\begin{equation*}
    E\{(M_{\textbf{i}}^{R_1,\dots,R_K})^2\} <  \zeta  \prod_{k=1}^K \alpha^{(k)}\{\alpha^{(k)}(1+\alpha^{(k)})^{-1}\}^{R_k},
\end{equation*} 
which concludes the proof.

\end{proof}

\begin{proof}[Proof of Lemma 1]
By rewriting the remainder as $M_n = \prod_{l=1}^{\Tilde{R}^{0n}} \prod_{k=1}^K (1-V_{l}^{(k)})$, its a priori expectation is
\begin{equation*}
    E(M_n) = \prod_{l=1}^{\Tilde{R}^{0n}}  \prod_{k=1}^K E(1-V_{l}^{(k)}) = \prod_{k=1}^K \left( \frac{\alpha^{(k)}}{1+\alpha^{(k)}}\right)^{\Tilde{R}^{0n}}.
\end{equation*}
 From Markov's inequality it follows
\begin{equation*}
    \pr(M_n>\epsilon_n) \le \frac{1}{\epsilon_n} \prod_{k=1}^K \left( \frac{\alpha^{(k)}}{1+\alpha^{(k)}}\right)^{\Tilde{R}^{0n}} = \prod_{k=1}^K \left( \frac{1}{\epsilon_n^{1/\Tilde{R}^{0n}}}\frac{\alpha^{(k)}}{1+\alpha^{(k)}}\right)^{\Tilde{R}^{0n}}.
\end{equation*}
Assuming $\epsilon_n^{1/\Tilde{R}^{0n}} > \alpha^{(k)}/(1+\alpha^{(k)})$ for any $k=1\dots,K$ we conclude that
\begin{equation*}
\pr(M_n>\epsilon_n) \le \prod_{k=1}^K\exp(-C \Tilde{R}^{0n}) = \exp (-CK\Tilde{R}^{0n}),
\end{equation*}
for a positive constant $C>1$. 

\end{proof}

\begin{proof}[Proof of Theorem 2]
We denote with $f(\cdot)$ the density of $y_i$ under model (9), while $f_0(\cdot)$ denotes the density of $y_i$ under the true model (10) in the main paper.
Let $\Pi(\cdot)$ be the prior measure on $\eta_i = \prod_{k=1}^K u_i^{(k)}$.
Let $E_n$ be the target event $E_n = \{M_n > \epsilon_n\}.$ Consistent with \cite{castillo2012}, we write 
\begin{equation} \label{eq:num_den}
    \pr(E_n \mid y) = \frac{\int_{E_n} \prod_{i=1}^n \frac{f(y_i)}{f_0(y_i)} d\Pi(\eta_i)}{\int \prod_{i=1}^n \frac{f(y_i)}{f_0(y_i)} d\Pi(\eta_i)}= \frac{N_n}{D_n}.
\end{equation}
We introduce an event $A_n$ with large probability under the true data generating process
\begin{equation*}
    A_n = \{D_n \ge e^{-r^2_n}\pr(||\eta_H|| \le r_n)\},
\end{equation*}
where $H=\{l=\Tilde{R}^{0n}+1, \dots,n\}$.
If we decompose the probability in \eqref{eq:num_den} in the sum of two complementary conditional probabilities, it follows from the law of total probability that 
\begin{equation*}
    E_0\{\pr(E_n \mid y_n)\} \le E_0\{\pr(E_n \mid y_n) \mathds{1}(A_n)\}+\pr_0(A_n^c).
\end{equation*}
Thanks to Lemma 5.2 of \cite{castillo2012} $\pr_0(A_n^c) \le \exp(-r_n^2)$ for increasing $n$, from which follows that using \eqref{eq:num_den} we obtain
\begin{equation*}
    E_0\{\pr(E_n \mid y_n)\} \le \frac{\pr(E_n)}{e^{-r^2_n}\pr(||\eta_H|| \le r_n)}+ \exp(-r_n^2).
\end{equation*}
We can use Lemma 2, reported below, to bound the denominator. Given that $\pr(||\eta_H|| \le r_n) \ge \pr(|\eta_{\Tilde{R}^{0n}}| \le r_n/    \sqrt{n})^{n-\Tilde{R}^{0n}}$, it follows by Lemma 2 that
\begin{equation*}
    \pr(||\eta_H|| \le r_n) \ge \left[\prod_{k=1}^K \left\{ 1- \left(\frac{\alpha^{(k)}}{1+\alpha^{(k)}}\right)^{\Tilde{R}^{0n}} \right\}\right]^{n-\Tilde{R}^{0n}}.
\end{equation*}
For high $n$ and a constant $A>1/2$ we have $\{\alpha^{(k)}/(1+\alpha^{(k)})\}^{\Tilde{R}^{0n}}<A \Tilde{R}^{0n}/n$ for each $k=1,\dots,K$. Hence, we can write
\begin{equation*}
     \pr(||\eta_H|| \le r_n) \ge \prod_{k=1}^K \left(1- \frac{A\Tilde{R}^{0n}}{n} \right)^n = \left(1- \frac{A\Tilde{R}^{0n}}{n} \right)^{Kn}.  
\end{equation*}
Using $(1-x)^{1/x} >1/(2e)$ for $0<x<0.5$ it follows that $\pr(||\eta_H|| \le r_n)  \ge (1/2e)^{KA\Tilde{R}^{0n}}$.
Then, choosing $r_n^2 = \Tilde{R}^{0n}$ it follows by Lemma 1 that
\begin{equation} \label{eq:proofE0}
      E_0\{\pr(E_n \mid y_n)\} \le \exp[-\Tilde{R}^{0n} \{CK-1-KA\log(2e) \} ]+ \exp(-\Tilde{R}^{0n}).
\end{equation}
Since $C>2Ae$ implies $C>1+KA\log(2e)$ for $A>1/2$ and $K > 1$, \eqref{eq:proofE0} yields $E_0\{\pr(E_n\mid y_1,   \dots, y_n)\} \to 0$.

\end{proof}

\begin{lemma}
    Under the multiway CUSP prior in (4) in the main paper with  $\theta_\infty = \theta_{\infty}(n) < (r/\sqrt{n})^{1/K}$, for $1<r<\sqrt{n}$ ,
    \begin{equation*}
        \pr\left(|\eta_h| \le \frac{r}{\sqrt{n}}  \right) >  \prod_{k=1}^K \left\{ 1- \left(\frac{\alpha^{(k)}}{1+\alpha^{(k)}}\right)^h \right\}.
    \end{equation*}
\end{lemma}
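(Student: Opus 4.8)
The plan is to reduce the product small-ball event to a per-mode statement and then exploit the independence of the $K$ factor sequences. The key inclusion is that $\bigcap_{k=1}^K\{|u_h^{(k)}| \le (r/\sqrt{n})^{1/K}\} \subseteq \{|\eta_h| \le r/\sqrt{n}\}$, since a product of $K$ numbers each bounded in modulus by $(r/\sqrt{n})^{1/K}$ is bounded by $r/\sqrt{n}$. Because the sequences $u^{(1)},\dots,u^{(K)}$ and the stick-breaking weights governing them are independent across modes, the probability of this intersection factorizes, so it suffices to prove the one-dimensional bound $\pr\{|u_h^{(k)}| \le (r/\sqrt{n})^{1/K}\} \ge 1 - \{\alpha^{(k)}/(1+\alpha^{(k)})\}^h$ for each $k$ and then take the product, the strictness of any single factor propagating to the product.

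For the per-mode bound I would first identify the right-hand side as a spike probability. Writing $1-\pi_h^{(k)} = \prod_{l=1}^h (1-V_l^{(k)})$ and using $E(1-V_l^{(k)}) = \alpha^{(k)}/(1+\alpha^{(k)})$ for $V_l^{(k)}\sim\mathrm{Beta}(1,\alpha^{(k)})$, marginalizing over the weights gives $\pr(\theta_h^{(k)} = \theta_\infty) = E(\pi_h^{(k)}) = 1 - \{\alpha^{(k)}/(1+\alpha^{(k)})\}^h$; thus the target right-hand side of the lemma is exactly the probability that all $K$ modes are assigned to the spike. Conditioning on the spike/slab label, I would bound $\pr\{|u_h^{(k)}| \le (r/\sqrt{n})^{1/K}\}$ from below by the mass it receives when $\theta_h^{(k)} = \theta_\infty$, where $u_h^{(k)} \sim N(0,\theta_\infty)$, and use the threshold $\theta_\infty < (r/\sqrt{n})^{1/K}$ to control the resulting Gaussian small-ball probability; the additional mass contributed by the slab component then upgrades the inequality to a strict one, as claimed.

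The main obstacle is precisely this small-ball control. Because the spike is a diffuse $N(0,\theta_\infty)$ rather than a point mass at zero, the conditional probability $\pr\{|N(0,\theta_\infty)| \le (r/\sqrt{n})^{1/K}\}$ is strictly below one, and the stated threshold constrains only the spike variance against the half-width of the interval rather than against its square. I therefore expect the delicate part to be quantifying this Gaussian small-ball probability and showing that, combined with the strictly positive slab contribution, the per-mode probability still dominates $1 - \{\alpha^{(k)}/(1+\alpha^{(k)})\}^h$; this is where the hypothesis on $\theta_\infty(n)$ must be used in full, and if the crude intersection event turns out to be too lossy I would instead work directly with the exact small-ball probability of the product $\prod_{k=1}^K u_h^{(k)}$ under the all-spike configuration.
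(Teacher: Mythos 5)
Your skeleton coincides with the paper's own proof up to the decisive step: the same inclusion $\bigcap_{k=1}^K\{|u_h^{(k)}|\le (r/\sqrt{n})^{1/K}\}\subseteq\{|\eta_h|\le r/\sqrt{n}\}$, the same factorization across modes (the paper performs it conditionally on $\pi_h^{(1)},\dots,\pi_h^{(K)}$ and integrates the weights out only at the end), and the same stick-breaking identity $E(\pi_h^{(k)})=1-\{\alpha^{(k)}/(1+\alpha^{(k)})\}^h$ identifying the right-hand side. None of that is in dispute.

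The gap is the per-mode bound, which you leave open, and the route you sketch cannot close it. Writing the conditional per-mode probability as $(1-\pi_h^{(k)})p_{\mathrm{slab}}+\pi_h^{(k)}p_{\mathrm{spike}}$, your plan needs $(1-\pi_h^{(k)})p_{\mathrm{slab}}\ge \pi_h^{(k)}(1-p_{\mathrm{spike}})$, and this fails for large $h$: $\pi_h^{(k)}\to 1$ while, under the stated hypothesis $\theta_\infty<(r/\sqrt{n})^{1/K}$, one only gets $p_{\mathrm{spike}}=2\Phi\{(r/\sqrt{n})^{1/K}\theta_\infty^{-1/2}\}-1\ge 2\Phi\{(r/\sqrt{n})^{1/(2K)}\}-1$, which for small $r/\sqrt{n}$ is far from one, so the slab cannot repair the deficit. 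The paper's proof takes a blunter step here: conditionally on the weights it writes the per-mode probability as $(1-\pi_h^{(k)})\,\pr\{|t_{2a_\theta}|\le (r/\sqrt{n})^{1/K}\}+\pi_h^{(k)}$, i.e., it credits the spike component with small-ball probability \emph{one} --- this is the only place the hypothesis on $\theta_\infty(n)$ enters --- and then simply discards the nonnegative slab term to obtain the lower bound $\pi_h^{(k)}$, after which integrating the weights gives the product of $E(\pi_h^{(k)})$; the strictness of the lemma's inequality comes from the discarded, strictly positive, Student-$t$ slab mass. So the idea you are missing relative to the paper is ``keep the spike at full mass and drop the slab,'' not ``keep the spike at partial mass and let the slab compensate.'' That said, your diagnosis of the soft spot is accurate as a matter of rigor: with a genuinely Gaussian spike $p_{\mathrm{spike}}<1$ always, so the paper's displayed equality is only literally valid if the spike is read as concentrating inside the interval; since $\theta_\infty$ is a variance, the natural comparison is indeed against the \emph{square} of the half-width, and a clean patch is to require $\theta_\infty(n)=o\{(r/\sqrt{n})^{2/K}\}$ and accept the conclusion with an extra factor $\{2\Phi((r/\sqrt{n})^{1/K}\theta_\infty^{-1/2})-1\}^K\to 1$, which is all the proof of Theorem 2 actually uses. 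As submitted, however, your proposal stops short of a proof precisely at the step the paper resolves.
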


\begin{proof}
We start by considering $|\eta_h| = |\prod_{k=1}^K u_h^{(k)}|$ and observing that $|\prod_{k=1}^K u_h^{(k)}| = \prod_{k=1}^K |u_h^{(k)}|$, the following inequalities hold

\begin{align*} 
\pr\left(|\eta_h| \le \frac{r}{\sqrt{n}} \mid \pi_h^{(1)},\dots, \pi_h^{(K)} \right) &\ge 
\pr\left\{ \left(\max_{k \in \{1,\dots,K\}} |u_h^{(k)}|\right)^{K} \le \frac{r}{\sqrt{n}} \mid \pi_h^{(1)},\dots, \pi_h^{(K)}
\right\} \\
&= \pr\left\{\max_{k \in \{1,\dots,K\}} |u_h^{(k)}| \le \left(\frac{r}{\sqrt{n}}\right)^{1/K} \mid \pi_h^{(1)},\dots, \pi_h^{(K)}
\right\} \\
&= \prod_{k=1}^K \pr\left\{|u_h^{(k)}| \le \left(\frac{r}{\sqrt{n}}\right)^{1/K} \mid \pi_h^{(1)},\dots, \pi_h^{(K)}
\right\}. 
\end{align*}
Considering each factor of the product of the last line of the equation above we have 
\begin{align*}
    \pr\left\{|u_h^{(k)}| \le \left(\frac{r}{\sqrt{n}} \right)^{1/K} \mid \pi_h^{(1)},\dots, \pi_h^{(K)} \right\} & =  (1- \pi_h^{(k)}) \pr\left\{|t_{2a_\theta;(a_\theta/b_\theta)^{1/2}}|\leq \left(\frac{r}{\sqrt{n}}\right)^{1/K}\right\} + \pi_h^{(k)}\\ & \ge\pi_h^{(k)}, 
\end{align*}
where $t_{\nu;s}$ is a Student-$t$ distributed random variables with scale $s$ and $\nu$ degrees of freedom. Hence 
\begin{align*}
    \pr\left(|\eta_h| \le \frac{r}{\sqrt{n}} \mid \pi_h^{(1)},\dots, \pi_h^{(K)} \right) &\ge \prod_{k=1}^K \pi_h^{(k)},
\end{align*}
and, integrating out $\pi_h^{(k)}$ for $k=1,\dots,K$ we obtain
\begin{align*}
     \pr\left(|\eta_h| \le \frac{r}{\sqrt{n}}\right) \ge \prod_{k=1}^K   E(\pi_h^{(k)}) = \prod_{k=1}^K \left\{ 1- \left(\frac{\alpha^{(k)}}{1+\alpha^{(k)}}\right)^h \right\}.
\end{align*}
\end{proof}

\section{Additional details on posterior inference} 

In this section, we provide practical details on the implementation of the adaptive Gibbs sampler outlined in Algorithm 2 in the main paper and discuss posterior inference for the probit-AdaTuck model.
To let the chain stabilize, we start the adaptation after a fixed number of iterations $\Tilde{t}$. In our analysis, we set $\Tilde{t} = 400$. The adaptation is performed with probability $p(t) = \exp(c_0+c_1t)$ and following \cite{Bhattacharyadunson11} we set $(c_0, c_1)=(-1, -5\times10^{-4})$. It is reasonable to initialize $(R_1, \dots, R_K)$ with sufficiently large values such that $R_k \le n_k+1$. We found in practice that initializing $R_k$ at $\{\mathrm{max} (n_1, \dots, n_k) + n_k\}/3$ for $k=1,\dots,K$ yields  a good trade-off in terms of conservative truncation levels and computational scalability. Lastly, we initialize $R^{*}_k$ to $R_k-1$. 

Posterior inference for the probit-AdaTuck model defined in (7) of the main paper proceeds through an adaptive Gibbs sampler which exploits a data-augmentation scheme for the latent tensor $\Q$ \citep{albert1993bayesian}. 
The adaptation strategy is the same as discussed in Section 3 of the main paper, so posterior inference still follows Algorithm 2 in the main paper. However, what differs for the probit-AdaTuck model are the Gibbs sampler's sampling steps. Algorithm \ref{algo:gibbsBin} outlines these specific steps for the probit-AdaTuck model.
For dealing with missing entries we follow the same strategy discussed in the main paper, i.e we run the sampler conditioned only on the observed data. Specifically, we sample $q_{\mathbf{i}}$ from a univariate truncated normal (step 1 in Algorithm \ref{algo:gibbsBin}) only for the corresponding observed $y_{\mathbf{i}}$, resulting in a computationally efficient procedure.

\begin{algorithm}[h]
    \caption{One cycle of the Gibbs sampler for the probit-AdaTuck model with finite truncation level $(R_1, \dots, R_K)$.} \label{algo:gibbsBin}
    \begin{enumerate}
    \item   For $\mathbf{i} \in I$:\\
    if $y_{\mathbf{i}} =1$ sample  $q_{\mathbf{i}}$ from $TN_{[0, \infty)}(z_{\mathbf{i}}, 1)$, where $TN_{[a, b)}(m,v)$ is a normal density of mean $m$ and variance $v$ truncated to the region $[a, b)$ and $z_{\mathbf{i}} = \sum_{k=1}^{K} \sum_{r_k=1}^{R_k}  g_{\mathbf{r}} \uu[1]{r_1} \circ \cdots  \circ \uu[K]{r_K}$;
    otherwise sample  $q_{\mathbf{i}}$ from $TN_{(-\infty, 0)}(z_{\mathbf{i}}, 1)$. 
    
    \item  For $k$ from 1 to $K$ and for $i_k$ from 1 to $n_k$:\\
     $\quad$ sample the $i$-th row of $\UU{k}$ from   $N_{R_k}(V_u G_{(k)} W^{(-k)} q_{i_k}, \ V_u)$, with $q_{i_k}$ the $i_k$-th\\ $\quad$ row of $Q_{(k)}$ and
    $V_u = \{\tdiag(\theta_{1}^{(k)}, \dots, \theta_{R_k}^{(k)})^{-1} + G_{(k)} W^{(-k)}   W^{(-k) \top} G_{(k)}^{\top}\}^{-1}$. 

    \item Sample $\tvec(\G)$ from $N_R(V_g W^{\top} \tvec(\Q), V_g )$,  with $V_g = (\Psi^{-1} +\sigma^{-2} W^{\top}  W)^{-1}$. 
    
    \item  Follows steps (4)--(9) of Algorithm 1in the main paper.

    \end{enumerate}

\end{algorithm}

\section{Image completion}

In this section, we analyze image data to assess AdaTuck's performance in image reconstruction.
A Red-Green-Blue (RGB) image can be naturally encoded in a three-way tensor, with  dimension pixel height $\times$ pixel width $\times$ 3, where each entry corresponds to the intensity of the red, green, or blue color channels.
The dataset used in our analysis is a cat image, depicted in panel (a) of Figure \ref{fig:RH}, with dimensions $120 \times 107 \times 3$. To assess the performance of AdaTuck in image reconstruction we randomly impute to missing values the $50\%$ and the $70\%$ of the total entries of the data $\Y$.

\begin{figure}
\begin{center}
    \includegraphics[width=0.9\textwidth]
    {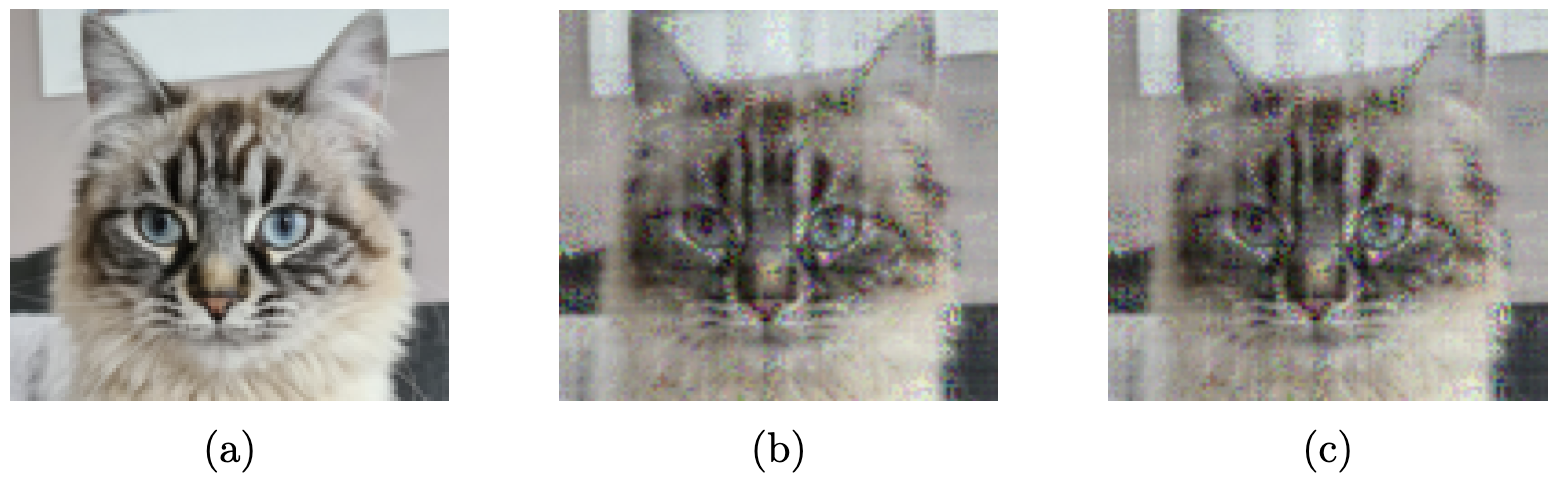} 
    \caption{Original image (a); image reconstructed by AdaTuck with $50\%$ missing values (b) and image reconstructed by AdaTuck with $70\%$ missing values (c).}
    \label{fig:RH}
\end{center}
\end{figure}

We apply AdaTuck to the image data, using the same hyperparameters and number of iterations for the Gibbs sampler described in Section 4.1 of the main paper. 
We benchmark the results with the  Bayesian tensor factorization method proposed in \cite{khan2016tensorbf} (\texttt{tensorBF}).
We compare the two methods by computing the  Monte Carlo MSE in predicting the missing values with the reconstructed tensor based on the latent structure. 
Table \ref{tab:Image} summarizes for the two scenarios the median and the interquartile range of the MSE obtained by AdaTuck and TensorBF.
In both scenarios, AdaTuck outperforms the competitor in predicting the missing entries of the tensor. 
Figure \ref{fig:RH} shows the cat image reconstructed by AdaTuck with $50\%$ missing values (panel b) and $70\%$ missing values (panel c).

\begin{table}[h]
    \centering
    \caption{Performance of AdaTuck and TensorBf in predicting the missing entries for the image data. The shorthand `NA(\%)' represents the percentage of missing entries with respect to the total tensor entries.}
    \vspace{0.3cm}
    \begin{tabular}{c|cc|cc}
    
   \multicolumn{1}{}{} &
      \multicolumn{2}{c}{AdaTuck} &
      \multicolumn{2}{c}{ TensorBF}  \\
      
    NA (\%) & MSE  &  MSE  & MSE  &  MSE    \\
    & (median) & (IQR) & (median) & (IQR) \\
      \hline 
   50  & 0.012 & 0.003 & 0.014 & 0.002 \\
     70 & 0.013  & 0.002 & 0.016 & 0.004 \\
    
    \hline 
  \end{tabular}
    \label{tab:Image}
\end{table}

\section{Additional results for illustrations}

\begin{table}[h]
    \centering
    \caption{Median and IQR of the latent multi-ranks estimated through the eigenvalue ratio (ER) and the Growth Ratio (GR) for the 20 different simulations in the two scenarios for $30\%$ and $50\%$ of total entries missing.}
    \vspace{0.3cm}
    \begin{tabular}{cc|ccc|ccc}
    
   \multicolumn{2}{}{} &
      \multicolumn{3}{c}{ER} &
      \multicolumn{3}{c}{GR}  \\
      
    $(R_1^0, R_2^0, R_3^0)$ & NA (\%) & $R_1$  &  $R_2$ & $R_3$ &$R_1$  &  $R_2$   &$R_3$ \\
    &&& median(IQR) &&& median(IQR)&\\
    \hline
   (5,5,5) & 30  & 1(1)  & 1(1) & 1(0.25)& 1.5(2)& 1(1) & 1(1)\\    
   & 50 & 1(0)  & 1(0) & 1(1)& 1(0.25)& 1(0) & 1(0.25)\\  
    (10,7,3) & 30  & 1(1)  & 1(1) & 1(0)& 1(1)& 1.5(1) & 1(0)\\
    & 50  & 1(1)  & 1(1) & 1(0.25) & 1(1)& 1(1) & 1(0)\\
    \hline 
  \end{tabular}
    \label{tab:rankDet}
\end{table}

\begin{figure}[h]
\begin{center}
    \includegraphics[width=0.99\textwidth]{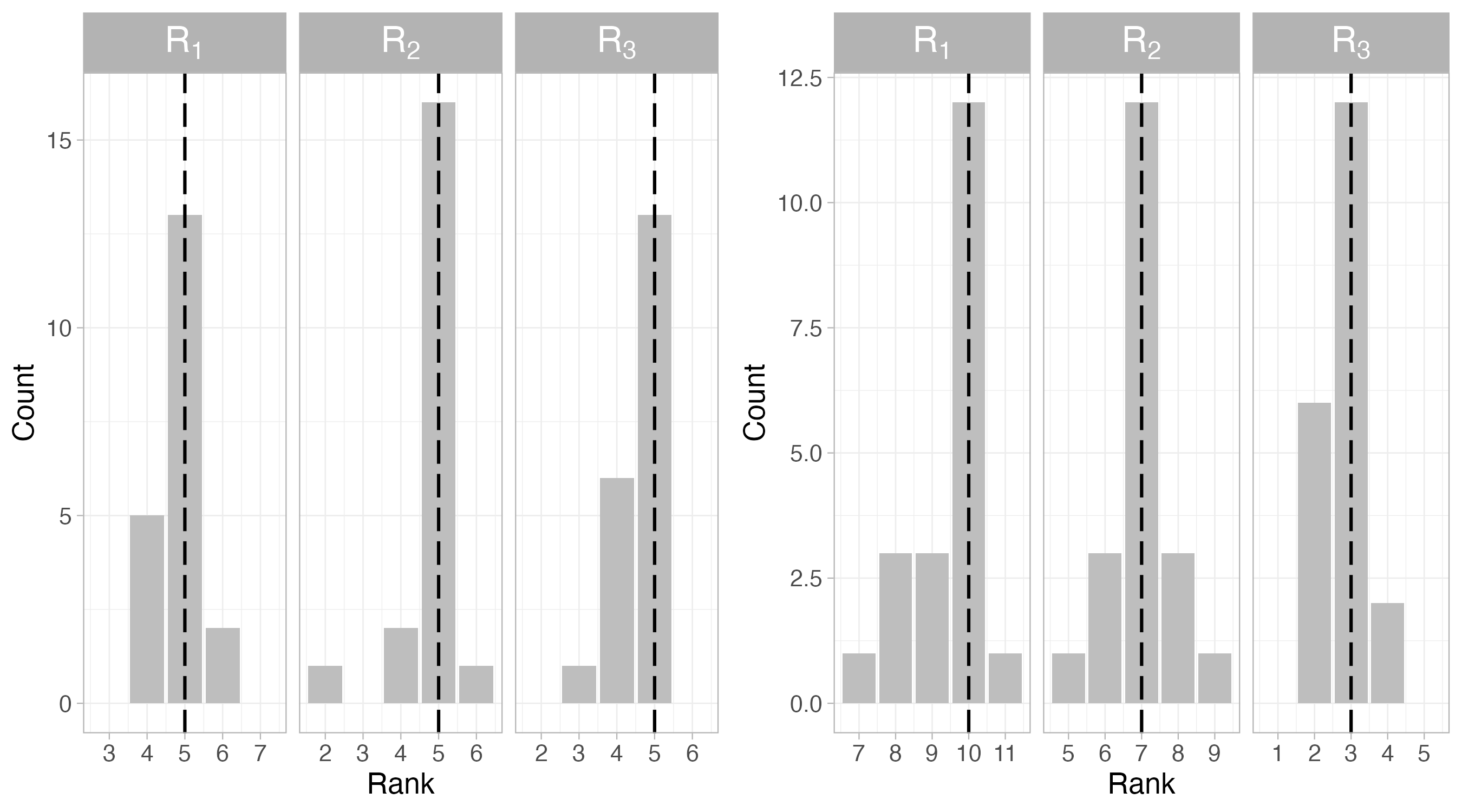} 
    \caption{Distribution of the Monte Carlo estimations of probit-AdaTuck for the median of $R_1^*$, $R_2^*$ and $R_3^*$ for the 20 different simulations in the two scenarios for $30\%$ of total entries missing. Black dotted lines are the true $R_1^0$, $R_2^0$ and $R_3^0$ values.}
    \label{fig:rankBin}
\end{center}
\end{figure}

\begin{figure}[h!]
\begin{center}
    \includegraphics[width=0.99\textwidth]{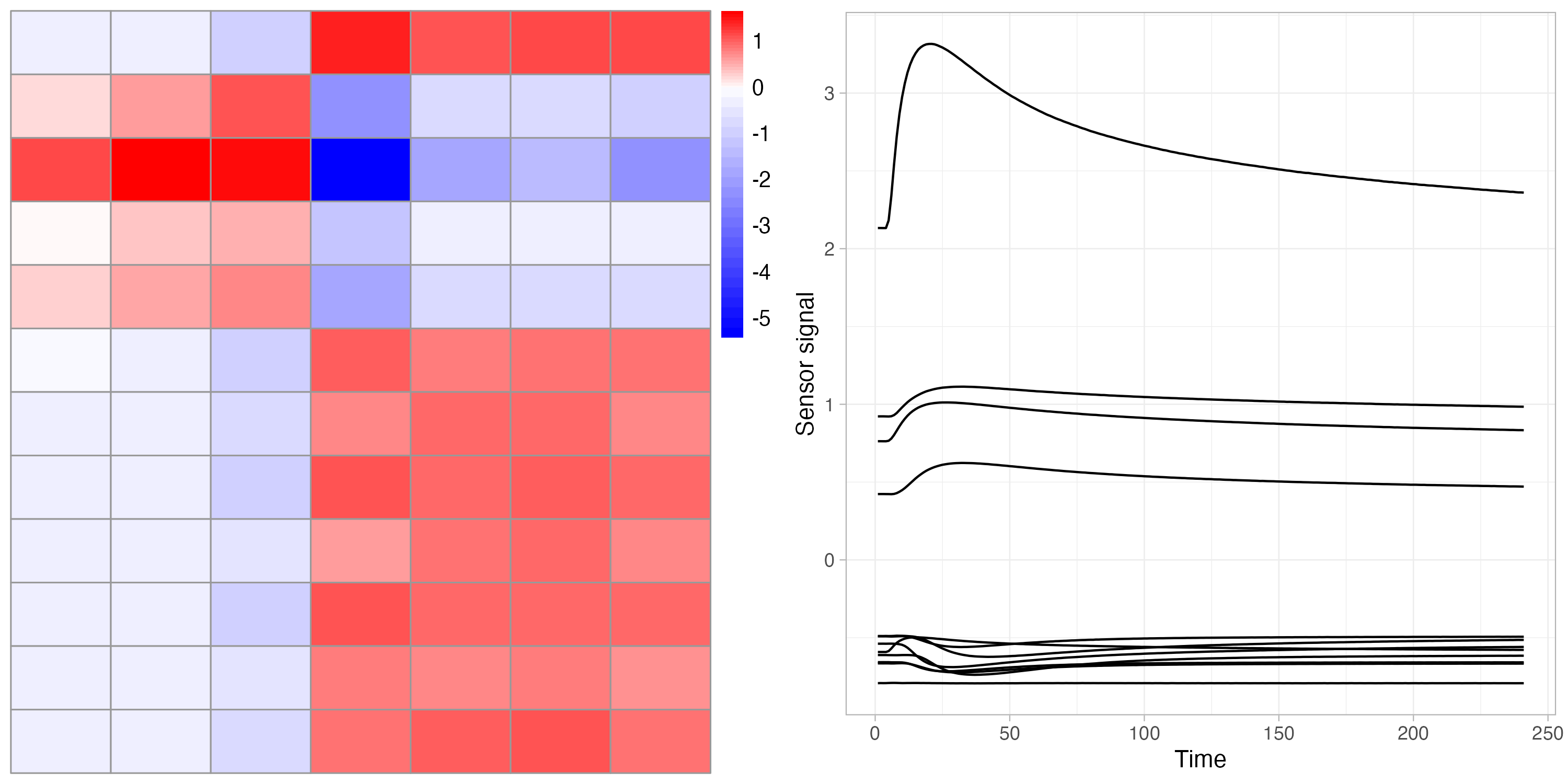} 
    \caption{Posterior sample of $U^{(3)}$ (left panel) and signal posterior means for the twelve sensors over time for one licorice sample (right panel).}
    \label{fig:postSensor}
\end{center}
\end{figure}

\begin{figure}[h!]
\begin{center}
\includegraphics[width=0.99\textwidth]{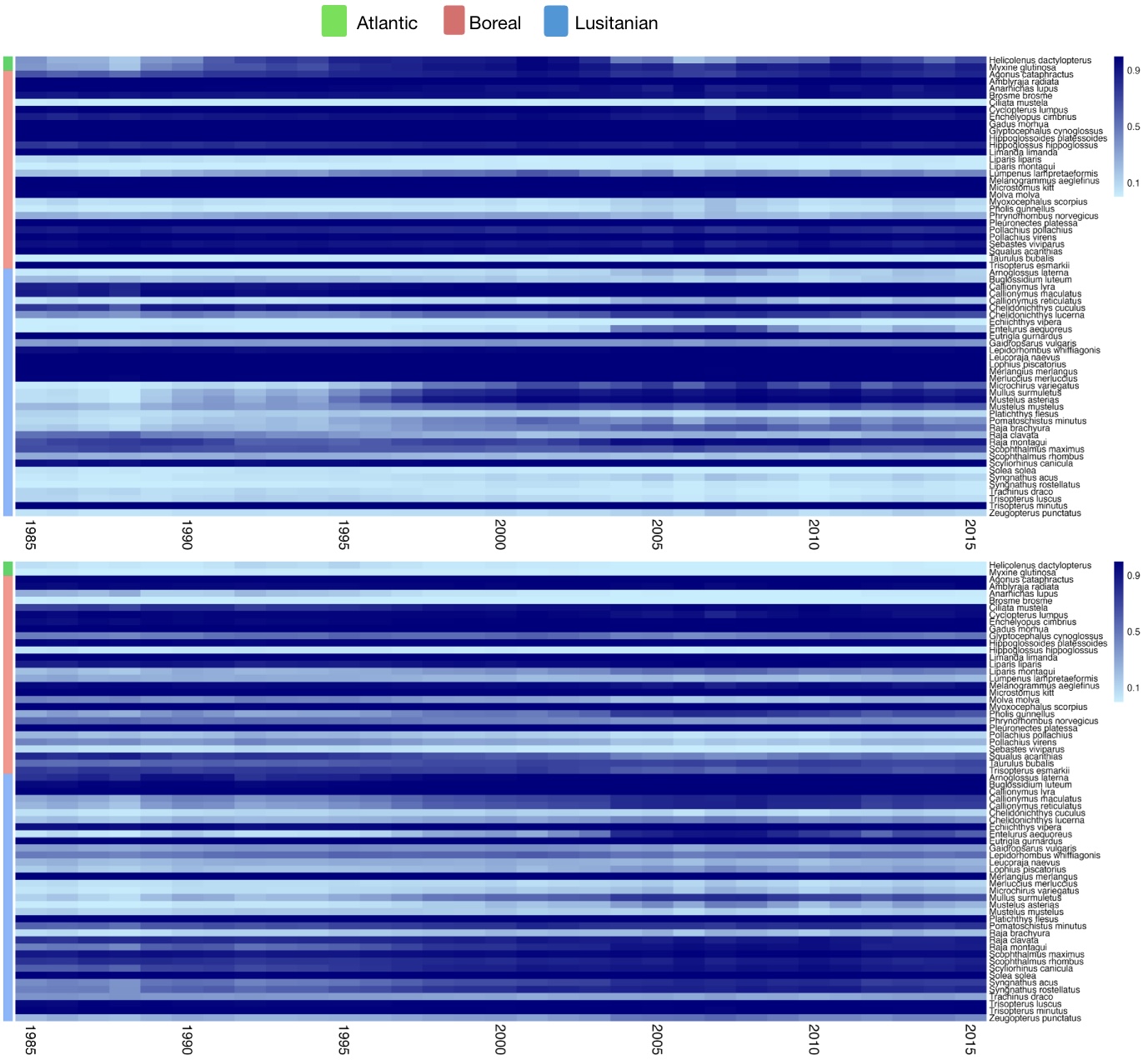} 
    \caption{Posterior mean of the frontal slices of $\mathcal{M}$ for the north area (top panel) and the southeast area (bottom panel). }
    \label{fig:Siteheat}
\end{center}
\end{figure}

\begin{figure}[h!]
\begin{center}
\includegraphics[width=0.85\textwidth]{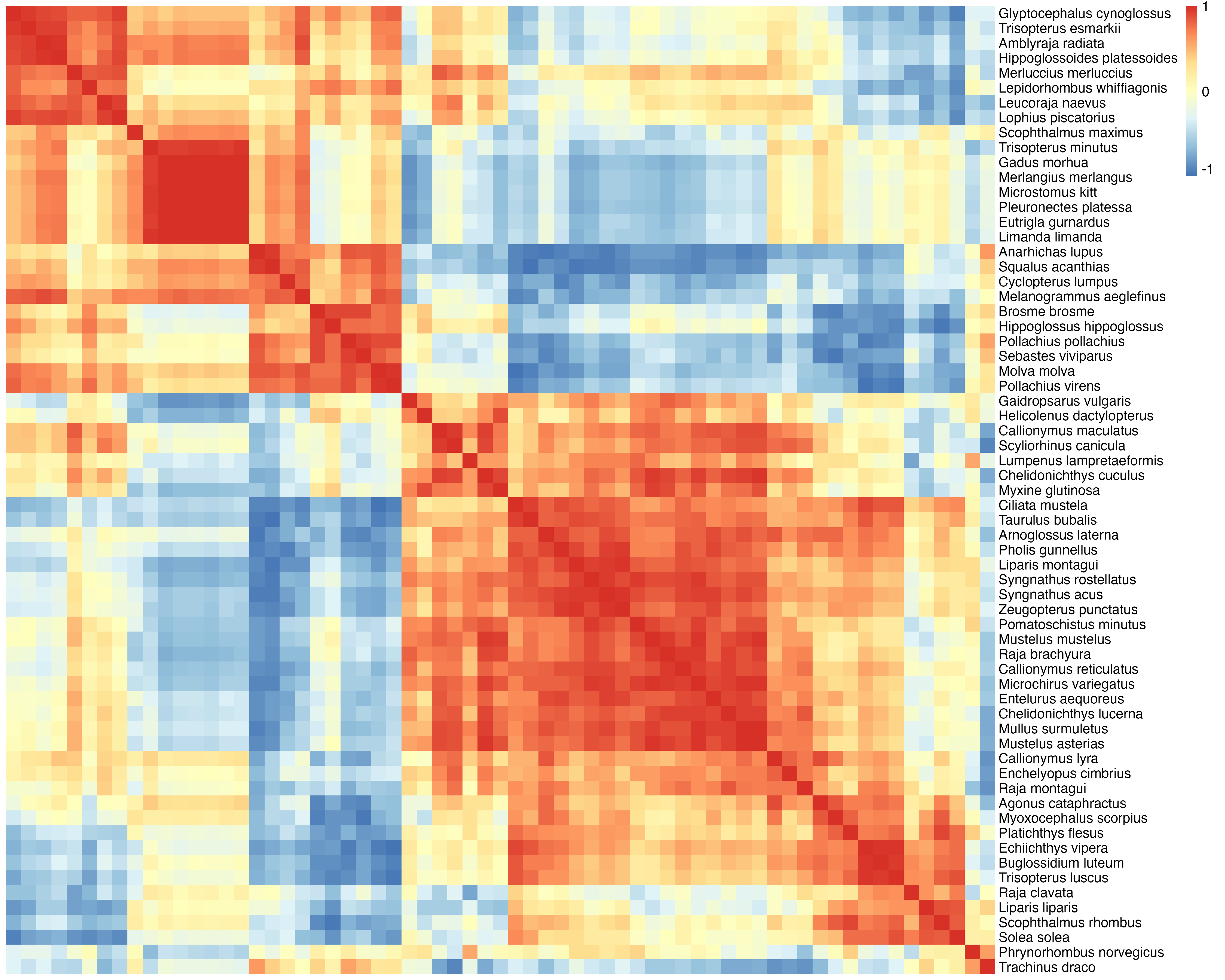} 
    \caption{Posterior mean of species co-occurrence correlation matrix for the first year in the  northernmost area.}
    \label{fig:corr}
\end{center}
\end{figure}

\end{document}